\newtheorem{proposition}{Proposition}
\newtheorem{theorem}{Theorem}
\newtheorem{corollary}{Corollary}
\newtheorem{lemma}{Lemma}
\newenvironment{proof}{\noindent\textbf{Proof.}}{\hfill\rule{2mm}{2mm}\medskip}
\def\dd{\mathinner{\ldotp\ldotp}}   
\def\per{\mathit{}{per}}               
\def\exp{\mathit{exp}}               
\def\beg{\mathit{beg}}
\def\endd{\mathit{end}}
\def\minleaf{\mathit{minleaf}}
\def\firstocc{\mathit{firstocc}}
\def\alignocc{\mathit{alignocc}}
\def\factends{\mathit{factends}}
\def\rep{\pi{}}
\title{Optimal searching of gapped repeats in a word}
\author{
Maxime Crochemore%
\thanks{King's College London,
London WC2R 2LS, UK
and Universit{\'e} Paris-Est, France, {\tt Maxime.Crochemore@kcl.ac.uk}}
\and
Roman Kolpakov\thanks{Lomonosov Moscow State University, Leninskie Gory, Moscow, 119992 Russia,
{\tt foroman@mail.ru}}
\and
Gregory Kucherov%
\thanks{LIGM, Universit{\'e} Paris-Est Marne-la-Vall{\'e}e,
77454 Marne-la-Vall{\'e}e CEDEX 2, France,
{\tt Gregory.Kucherov@univ-mlv.fr}}
}
\date{\empty}
\begin{document}
\maketitle



\begin{abstract}
Following (Kolpakov et al., 2013; Gawrychowski and Manea, 2015),
we continue the study of {\em $\alpha$-gapped repeats}
in strings, defined as factors $uvu$ with
$|uv|\leq \alpha |u|$. Our main result is the $O(\alpha n)$ bound on
the number of {\em maximal} $\alpha$-gapped repeats in a string of
length $n$, previously proved to be $O(\alpha^2 n)$ in (Kolpakov et al., 2013).
For a closely
related notion of maximal $\delta$-subrepetition (maximal factors of exponent between $1+\delta$ and $2$), our result implies the $O(n/\delta)$ bound on their number, which improves the bound of (Kolpakov et al., 2010)
by a $\log n$ factor.

We also prove an algorithmic time bound $O(\alpha n+S)$ ($S$ size of
the output) for computing all maximal $\alpha$-gapped repeats. Our
solution, inspired by (Gawrychowski and Manea, 2015),
is different from the recently
published proof by (Tanimura et al., 2015)
of the same bound. Together with our bound on $S$, this implies an
$O(\alpha n)$-time algorithm for computing all maximal $\alpha$-gapped
repeats.

\end{abstract}

\section{Introduction}

\paragraph{Notation and basic definitions.}
Let $w=w[1]w[2]\ldots w[n]=w[1 \dd n]$ be an arbitrary word.
The length $n$
of~$w$ is denoted by $|w|$. For any $1\le i\le j\le n$, word $w[i]\ldots
w[j]$ is called a {\it factor} of~$w$ and is denoted by $w[i \dd j]$. Note
that notation $w[i \dd j]$ denotes two entities: a word and its
occurrence starting at position $i$ in $w$.
To underline the second meaning, we will sometimes use the term {\it segment}.
Speaking about the equality between factors can also be ambiguous, as it may mean that the factors are identical words or identical segments.
If two factors $u,
v$ are identical words, we call them {\it equal} and denote this by
$u=v$. To express that $u$ and $v$ are the same segment, we use the notation
$u\equiv v$. For any $i=1\ldots n$, factor $w[1 \dd i]$ (resp. $w[i \dd
n]$) is a {\it prefix} (resp. {\it suffix}) of~$w$. By {\em positions}
on~$w$ we mean indices $1, 2,\ldots ,n$ of letters in $w$. For
any factor~$v\equiv w[i \dd j]$ of~$w$, positions $i$ and $j$ are called
respectively {\it start position} and {\it end position} of~$v$ and denoted by
$\mathit{beg}(v)$ and $\mathit{end}(v)$ respectively. Let $u, v$ be two factors
of~$w$. Factor $u$ {\it is contained} in~$v$ iff $\mathit{beg}(v)\le \mathit{beg}(u)$ and $\mathit{end}(u)\le \mathit{end}(v)$. Letter $w[i]$ {\it is contained} in~$v$ iff $\mathit{beg}(v)\le i\le \mathit{end}(v)$.

A positive integer $p$ is called  a {\it period} of~$w$ if $w[i]=w[i+p]$ for
each $i=1,\ldots ,n-p$. We denote by $\per(w)$ the {\em smallest period} of~$w$ and
define the {\it exponent} of~$w$ as $\exp(w)=|w|/\per(w)$.
A word is called
{\it periodic} if its exponent is at least~2. Occurrences of periodic words
are called {\it repetitions}.

\paragraph{Repetitions, squares, runs.}
Patterns in strings formed by repeated factors are of primary importance
in word combinatorics~\cite{Lothaire83} as well as
in various applications such as string matching
algorithms~\cite{GaliSeiferas83,CrochRytter95}, molecular
biology~\cite{Gusfield97}, or text compression~\cite{Storer88}. The simplest
and best known example of such patterns is a factor of the form $uu$, where $u$
is a nonempty word. Such repetitions are called {\it squares}.
Squares have been extensively studied.
While the number of all square occurrences can be quadratic (consider
word $\texttt{a}^n$), it is known that the number of {\em primitively-rooted}
squares is $O(n\log n)$ \cite{CrochRytter95}, where a square $uu$ is primitively-rooted if
the exponent of $u$ is not an integer greater than~$1$.
An optimal $O(n\log n)$-time algorithm for
finding all primitively-rooted squares was proposed in~\cite{Crochemor81}.

Repetitions can be seen as a natural generalization of squares.
A repetition in a given word is called {\it maximal} if it cannot
be extended by at least one
letter to the left nor to the right without changing (increasing) its minimal period. More precisely, a repetition
$r\equiv w[i \dd j]$ in~$w$ is called {\it maximal} if it satisfies the
following conditions:
\begin{enumerate}
\item $w[i-1]\neq w[i-1+\per(r)]$ if $i>1$,
\item $w[j+1-\per(r)]\neq w[j+1]$ if $j<n$.
\end{enumerate}
For example, word $\texttt{cababaaa}$ has two maximal repetitions: $\texttt{ababa}$
 and $\texttt{aaa}$. Maximal repetitions are usually called {\it runs} in the literature. Since any repetition is contained in some run,
the set of all runs can be
considered as a compact encoding of all repetitions in the word. This set has many useful applications, see, e.g., \cite{Crochetal1}. For any
word~$w$, we denote by ${\cal R}(w)$ the number of maximal repetitions in~$w$ and by ${\cal E}(w)$ the sum of exponents of all maximal repetitions in~$w$. The following statements are proved in~\cite{KK00}.
\begin{theorem}
$\max_{|w|=n}{\cal E}(w)=O(n)$. \label{sumexp}
\end{theorem}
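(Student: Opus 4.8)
The plan is to prove that $\mathcal{E}(w) \le Cn$ for a universal constant $C$ and every word $w$ of length $n$, which is exactly the asserted bound on $\max_{|w|=n}\mathcal{E}(w)$. I would organize the argument around a dyadic decomposition of the runs according to the magnitude of their periods. For each integer $k\ge 1$ let $R_k$ be the set of maximal repetitions $r$ in $w$ with $2^{k-1}\le\per(r)<2^k$, and put $L_k=\sum_{r\in R_k}|r|$. Since $\exp(r)=|r|/\per(r)\le |r|/2^{k-1}$ for every $r\in R_k$, and since $\per(r)\le n/2$ forces $k\le\log_2 n$, we obtain
\[
\mathcal{E}(w)=\sum_{k\ge 1}\sum_{r\in R_k}\exp(r)\le\sum_{k\ge 1}2^{-(k-1)}L_k .
\]
Hence it suffices to prove a bound $L_k\le C'n$ with $C'$ independent of $k$: then $\mathcal{E}(w)\le\sum_{k\ge1}2^{-(k-1)}C'n=2C'n$, the geometric weighting being precisely what absorbs the $\Theta(\log n)$ distinct period scales. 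Note that the individual sums $L_k$ need not be small relative to $\sum_k L_k=\sum_r|r|$, which can be of order $n\log n$; it is the exponent weighting that saves a logarithmic factor.

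The heart of the matter is the following local density estimate, which I would state and prove separately: \emph{every position $i$ of $w$ is contained in at most a constant number of runs from $R_k$, uniformly in $k$.} Granting this, double counting gives
\[
L_k=\sum_{r\in R_k}|r|=\sum_{i=1}^{n}\#\{\,r\in R_k : i\text{ is contained in }r\,\}\le C'n,
\]
which is exactly the bound needed above.

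To establish the density estimate, fix $i$ and write $p=2^{k-1}$, so that every run of $R_k$ containing $i$ has minimal period in $[p,2p)$ and length at least $2p$. The essential tool is the periodicity lemma of Fine and Wilf together with the maximality of runs. If two runs $r_1,r_2\in R_k$ with distinct periods $p_1<p_2$ overlapped in a segment of length at least $p_1+p_2$, then that segment, possessing both periods $p_1$ and $p_2$, would have period $\gcd(p_1,p_2)\le p_2-p_1<p\le p_1$; being a factor of length at least $2p_2\ge 2p_1$ of the run $r_1$ of minimal period $p_1$, this contradicts the fact that the minimal period of a run equals that of any of its factors of length at least twice the period. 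Thus two distinct-period runs of $R_k$ overlap in fewer than $p_1+p_2<4p$ positions, while two equal-period runs overlap in fewer than $p$ positions by maximality. A short case analysis around $i$ then finishes the estimate: a run containing $i$ extends at least $p$ to the left or at least $p$ to the right of $i$, and any two runs that both extend at least $2p$ on the same side of $i$ would overlap in at least $4p$ positions, which the previous bound forbids; sorting runs by which side they reach far on, and by equal versus distinct periods, leaves only $O(1)$ of them at $i$.

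I expect this final case analysis inside the density lemma — separating runs by whether they reach far to the left of $i$, far to the right, or both, and treating equal-period and distinct-period pairs consistently so that the resulting constant is genuinely independent of $k$ — to be the main obstacle. By contrast, the dyadic reduction of the first paragraph and the double-counting summation are routine once the density estimate is in hand.
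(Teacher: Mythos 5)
Your dyadic reduction is sound: with $R_k$ the runs of minimal period in $[2^{k-1},2^k)$ and $L_k=\sum_{r\in R_k}|r|$, the inequality $\mathcal{E}(w)\le\sum_k 2^{-(k-1)}L_k$ and the double-counting step are correct, so everything hinges on your density lemma that each position lies in $O(1)$ runs of $R_k$, uniformly in $k$. Be aware, however, that the paper does not prove this theorem at all: it is quoted from \cite{KK00}, where it is a deep result (modern proofs, as in the runs-theorem paper \cite{RunsTheor}, go through Lyndon roots). Your density lemma is essentially as hard as the theorem itself, and the argument you sketch for it fails. The first, concrete error is arithmetic: two runs that both contain $i$ and both extend at least $2p$ to the right of $i$ overlap only in the roughly $2p$ positions between $i$ and $i+2p$, not in $4p$ positions; since your forbidden-overlap threshold for distinct periods is about $p_1+p_2$, which can be close to $4p$, no contradiction arises. (Also, two runs with the same period $q\in[p,2p)$ can be shown to overlap in fewer than $q$ positions, not fewer than $p$.)

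The deeper problem is that no case analysis based only on your pairwise overlap bounds can work, because those bounds are too weak to force an $O(1)$ count. Consider ``virtual'' runs with pairwise distinct prime periods $p_1<\dots<p_m$ in $[p,2p)$, where run $j$ is the interval $[\,i-L_j,\; i-L_j+2p_j-1\,]$ of length exactly $2p_j$, with $L_1=0$ and $L_{j+1}-L_j=p_{j+1}-p_j+2$. Every such interval contains $i$ (since $L_j\le p+2m\le 2p_j-1$ for $m<p/2$), and for any pair $a<b$ the overlap is either $2p_a$ (nested case) or $p_a+p_b-2(b-a)$, both strictly less than $p_a+p_b-1=p_a+p_b-\gcd(p_a,p_b)$. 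So a family of size $m=\Theta(p/\log p)$ satisfies all the pairwise constraints you derive, yet $m$ is unbounded. Hence the pairwise Fine--Wilf/maximality bounds, though correct, cannot by themselves exclude many runs with similar but distinct periods starting close to one another; ruling this out is exactly the hard core of the runs literature and requires genuinely stronger tools applied beyond pairs (the three-squares lemma of Crochemore and Rytter, or the Lyndon-root assignment of \cite{RunsTheor}, under which each run $r$ receives $\ge\exp(r)-1$ private positions, giving $\mathcal{E}(w)\le 2n+\mathcal{R}(w)$ directly). As it stands, your argument reduces the theorem to an unproved statement of comparable difficulty, so there is a genuine gap.
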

\begin{corollary}
$\max_{|w|=n}{\cal R}(w)=O(n)$.
\label{onmaxrun}
\end{corollary}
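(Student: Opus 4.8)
The plan is to deduce this directly from Theorem~\ref{sumexp} by observing that every maximal repetition contributes at least a fixed constant to the sum of exponents. First I would recall the relevant definitions: a repetition is, by definition, an occurrence of a periodic word, and a word is declared periodic precisely when its exponent is at least~$2$. Consequently each of the $\mathcal{R}(w)$ maximal repetitions in~$w$ is a run~$r$ with $\exp(r)\geq 2$.

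Summing this uniform lower bound over all maximal repetitions of~$w$ gives
\[
\mathcal{E}(w)=\sum_{r}\exp(r)\;\geq\;2\,\mathcal{R}(w),
\]
so that $\mathcal{R}(w)\leq \tfrac{1}{2}\mathcal{E}(w)$. Applying Theorem~\ref{sumexp}, which asserts $\max_{|w|=n}\mathcal{E}(w)=O(n)$, then yields $\mathcal{R}(w)\leq \tfrac{1}{2}\mathcal{E}(w)=O(n)$ uniformly over all words of length~$n$, which is exactly the claimed bound.

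There is essentially no obstacle here: the only thing to verify is the elementary inequality $\exp(r)\geq 2$ for every run~$r$, and this is immediate from the definitions of \emph{periodic word} and \emph{repetition} recalled above. All of the genuine combinatorial difficulty has already been absorbed into Theorem~\ref{sumexp}; the corollary is merely the remark that a linear bound on the \emph{sum} of exponents simultaneously bounds their \emph{count}, since each summand is bounded below by the constant~$2$.
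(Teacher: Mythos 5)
Your proof is correct and follows exactly the route the paper intends: the statement is presented as a corollary of Theorem~\ref{sumexp}, and the derivation is precisely your observation that every maximal repetition has exponent at least~$2$, so ${\cal R}(w)\le \frac{1}{2}{\cal E}(w)=O(n)$.
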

A series of
papers (e.g.,~\cite{CrochIlieTinta, Crochetal11}) focused on more precise upper bounds on ${\cal E}(w)$
and ${\cal R}(w)$ trying to obtain the best possible constant factor behind the $O$-notation. A breakthrough in this direction was recently made in~\cite{RunsTheor} where the so-called
``runs conjecture'' ${\cal R}(w[1..n])<n$ was proved. To the best of our knowledge, the currently best upper bound
${\cal R}(w[1..n])\le\frac{22}{23}n$ on ${\cal R}(w)$ is shown
in~\cite{BeyRunsTheor}.

On the algorithmic side, an $O(n)$-time algorithm for finding all runs in a
word of length~$n$ was proposed in~\cite{KK00} for the case of constant-size alphabet.
Another $O(n)$-time algorithm, based on a different approach, has been
proposed in~\cite{RunsTheor}. The $O(n)$ time bound holds for the
(polynomially-bounded) integer alphabet as well, see, e.g.,
\cite{RunsTheor}. However, for the case of unbounded-size alphabet
where characters can only be tested for equality, the lower bound
$\Omega(n\log n)$ on computing all runs has been known for a long time
\cite{MainLorentz85}. It is an interesting open question (raised over
20 years ago in \cite{BreslauerPhD92}) whether the $O(n)$ bound holds 
for an unbounded linearly-ordered alphabet. Some results related to this 
question have recently been obtained in \cite{KosolobovSTACS15}.

\paragraph{Gapped repeats and subrepetitions.}
Another natural generalization of squares are factors of the form $uvu$ where $u$
and $v$ are nonempty words. We call such factors {\it gapped repeats}. For a
gapped repeat $uvu$, the left (resp. right) occurrence of~$u$ is called the {\it left}
(resp. {\em right}) {\em copy}, and $v$ is called the {\it gap}. The {\it period} of this
gapped repeat is $|u|+|v|$. For a gapped repeat~$\rep$,
we denote the length of copies of~$\rep$ by $c(\rep)$ and the period
of~$\rep$ by $p(\rep)$.
Note that a gapped repeat $\rep=uvu$ may have different periods, and $\per(\rep)\leq p(\rep)$.
For example, in string $\texttt{cabacaabaa}$, segment $\texttt{abacaaba}$
 corresponds to two gapped repeats having copies $\texttt{a}$ and $\texttt{aba}$
 and periods $7$ and $5$ respectively. Gapped repeats forming the same segment but having different periods are considered distinct.
This means that to specify a gapped repeat it is generally not sufficient to specify its segment.
If $u',u''$ are equal non-overlapping factors and $u'$ occurs to the left of $u''$, then by $(u',u'')$ we denote the gapped repeat with left copy~$u'$ and right copy~$u''$.
For a given gapped
repeat $(u', u'')$, equal factors $u'[i  \dd
j]$ and $u''[i  \dd j]$, for $1\le i\le j\le |u'|$, of the copies $u'$, $u''$ are called {\it corresponding
factors} of repeat $(u', u'')$.

For any real $\alpha>1$, a gapped repeat~$\rep$ is called {\it
$\alpha$-gapped} if $p(\rep)\le\alpha c(\rep)$. Maximality of gapped
repeats is defined similarly to
repetitions. A
gapped repeat $(w[i' \dd j'], w[i'' \dd j''])$ in~$w$ is called {\it maximal}
if it satisfies the following conditions:
\begin{enumerate}
\item $w[i'-1]\neq w[i''-1]$ if $i'>1$,
\item $w[j'+1]\neq w[j''+1]$ if $j''<n$.
\end{enumerate}
In other words, a gapped repeat $\rep$ is maximal if its copies cannot be
extended to the left nor to the right by at least one letter
without breaking its period $p(\rep)$. As observed in~\cite{KPPHr13},
any $\alpha$-gapped repeat is contained either in a (unique) maximal
$\alpha$-gapped repeat with the same period, or in a (unique) maximal
repetition with a period which is a divisor of the repeat's period. For example,
 in the above string $\texttt{cabacaabaa}$, gapped repeat $(\texttt{ab})\texttt{aca}(\texttt{ab})$
 is contained in maximal repeat $(\texttt{aba})\texttt{ca}(\texttt{aba})$
 with the same period $5$. In string $\texttt{cabaaabaaa}$, gapped repeat
 $(\texttt{ab})\texttt{aa}(\texttt{ab})$ with period $4$ is contained in maximal 
 repetition $\texttt{abaaabaaa}$ with period $4$. Since all maximal repetitions 
 can be computed efficiently in $O(n)$ time (see above),
the problem of computing all $\alpha$-gapped repeats in a word can be
reduced to the problem of finding all maximal $\alpha$-gapped repeats.

Several variants of the problem of computing gapped repeats have been studied earlier. In~\cite{Brodal00}, it was shown that all
maximal gapped repeats with a gap length belonging to a specified interval can
be found in time $O(n\log n+S)$, where $n$ is the word length and $S$ is
output size. In \cite{KK00a}, an algorithm was proposed for finding all gapped repeats with a fixed gap
length $d$ running in time
$O(n\log d+S)$.
In~\cite{KPPHr13}, it was proved that the number of maximal
$\alpha$-gapped repeats  in a word of length~$n$ is bounded by $O(\alpha^2
n)$ and all maximal $\alpha$-gapped repeats can be found in
$O(\alpha^2 n)$ time for the case of integer alphabet. A new approach to
computing gapped repeats was recently proposed in~\cite{GabrMan, DumiMan}. In particular,
in~\cite{GabrMan} it is shown that the longest $\alpha$-gapped repeat in a
word of length~$n$ over an integer alphabet can be found in $O(\alpha n)$
time. Finally, in a recent paper~\cite{Tanimuraetal}, an algorithm is proposed for finding all
maximal $\alpha$-gapped repeats in $O(\alpha n+S)$ time where $S$ is the output size, for a constant-size alphabet.
The algorithm uses an approach previously introduced in \cite{BadkobehCrochToop12}.

Recall that repetitions are segments with exponent at least $2$. Another way to approach gapped repeats is to consider segments with exponent smaller than $2$, but strictly greater than $1$. Clearly, such a segment corresponds to a gapped repeat $\rep=uvu$ with $\per(\rep)=p(\rep)=|u|+|v|$. We will call such factors (segments) {\em subrepetitions}.
More
precisely, for any~$\delta$, $0<\delta<1$, by a $\delta$-subrepetition we
mean a factor~$v$ that satisfies $1+\delta\le \exp(v)<2$. Again, the
notion of maximality straightforwardly applies to
subrepetitions as well: maximal subrepetitions are defined exactly in the same way as
maximal repetitions.
The relationship between maximal subrepetitions and maximal gapped repeats was clarified in \cite{KPPHr13}. Directly from the definitions, a maximal subrepetition $\rep$ in a string $w$ corresponds to a maximal gapped repeat with $p(\rep)=\per(\rep)$.
Futhermore, a maximal $\delta$-subrepetition corresponds to a maximal
$\frac{1}{\delta}$-gapped repeat. However, there may be more maximal
$\frac{1}{\delta}$-gapped repeats than maximal
$\delta$-subrepetitions, as 
not every maximal $\frac{1}{\delta}$-gapped repeat corresponds to a
maximal $\delta$-subrepetition. 

Some combinatorial results on the number of maximal
subrepetitions in a string were obtained in~\cite{KKOch}. In particular, it was
proved that the number of maximal $\delta$-subrepetitions in a word of
length~$n$ is bounded by $O(\frac{n}{\delta}\log n)$.
In~\cite{KPPHr13},
an $O(n/\delta^2)$ bound on
the number of maximal $\delta$-subrepetitions in a word of length~$n$ was
obtained. Moreover, in~\cite{KPPHr13}, two algorithms were proposed for finding all maximal
$\delta$-subrepetitions in the word running respectively in
$O(\frac{n\log\log n}{\delta^2})$ time and in $O(n\log
n+\frac{n}{\delta^2}\log\frac{1}{\delta})$ expected time, over the integer alphabet. In \cite{BadkobehCrochToop12}, it is shown that all subrepetitions with the largest exponent (over all subrepetitions) can be found in an overlap-free string in time $O(n)$, for a constant-size alphabet.

\paragraph{Our results.}
In the present work we improve the results of~\cite{KPPHr13} on
maximal gapped repeats: we prove an asymptotically tight bound of $O(\alpha
n)$ on the number of maximal $\alpha$-gapped repeats in a word of
length~$n$ (Section~\ref{counting}). From our bound, we also derive a $O(n/\delta)$ bound on
the number of maximal $\delta$-subrepetitions occurring in the word,
which improves the bound of \cite{KKOch} by a $\log n$ factor.
Then, based on the algorithm of~\cite{GabrMan}, we obtain
an asymptotically optimal $O(\alpha n)$ time bound for computing all maximal
$\alpha$-gapped repeats in a string (Section~\ref{algorithm}). Note that this bound follows from the recently published paper \cite{Tanimuraetal} that presents an $O(\alpha n+S)$ algorithm for computing all maximal $\alpha$-gapped repeats. Here we present an alternative algorithm with the same bound that we obtained independently.

\section{Preliminaries}

In this Section we state a few propositions that will be used later in
the paper. The following fact is well-known (see,
e.g.,~\cite[Proposition~2]{Kolpakov12}).

\begin{proposition}
Any period~$p$ of a word~$v$ such that $|v|\ge 2p$ is divisible by $\per(v)$,
the smallest period of $v$. \label{divminper}
\end{proposition}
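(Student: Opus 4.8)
The plan is to reduce the statement to the periodicity lemma of Fine and Wilf. First I would set $q=\per(v)$ for the smallest period; since $p$ is a period and $q$ is the smallest one, we have $q\le p$. The length hypothesis then gives $|v|\ge 2p\ge p+q$, which in particular exceeds the Fine--Wilf threshold $p+q-\gcd(p,q)$. By the periodicity lemma, $v$ therefore admits $\gcd(p,q)$ as a period. But $\gcd(p,q)\le q$, while minimality of $q$ forces every period of $v$ to be at least $q$; hence $\gcd(p,q)=q$, i.e.\ $q\mid p$, which is exactly the claim.

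If one prefers a self-contained argument avoiding a black-box appeal to Fine--Wilf, the core fact to establish is the following: whenever $v$ has two periods $p\ge q$ with $|v|\ge p+q$, then $p-q$ is also a period of $v$. I would prove this by verifying $v[i]=v[i+(p-q)]$ for every admissible position $i$. The natural route is to chain the two periods together, using period $p$ and then period $q$ to write $v[i]=v[i+p]=v[i+p-q]$.

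The main obstacle, and the only place requiring genuine care, is the boundary behaviour: this direct chain is valid only when $i+p\le|v|$, and for the remaining positions (those with $i+p>|v|$, which exist precisely because copies near the right end cannot be shifted forward by $p$) the equality must be routed the other way, as $v[i]=v[i-q]=v[i-q+p]=v[i+(p-q)]$, first stepping back by $q$ and then forward by $p$. Checking that all indices stay within $[1,|v|]$ in this second case is exactly where the hypothesis $|v|\ge p+q$ is used, since $i+p>|v|\ge p+q$ guarantees $i>q$, so that $i-q\ge 1$, while the admissibility of $i$ guarantees $i-q+p\le|v|$.

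Granting this core fact, I would finish by Euclidean iteration. Writing $p=kq+r$ with $0\le r<q$, repeated subtraction of $q$ shows successively that $p-q,\,p-2q,\ldots,r$ are all periods of $v$; at each step the current period is still at least $q$, so the condition $|v|\ge p+q$ continues to license the lemma. If $r$ were positive it would be a period strictly smaller than $\per(v)=q$, contradicting minimality. Hence $r=0$ and $q\mid p$.
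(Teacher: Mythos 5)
Your proof is correct, but note that there is nothing in the paper to compare it against: Proposition~\ref{divminper} is stated as a well-known fact and delegated to a citation, with no proof given in the text itself. Your first argument is the standard reduction to the Fine--Wilf periodicity lemma: writing $q=\per(v)$, from $|v|\ge 2p\ge p+q > p+q-\gcd(p,q)$ the lemma yields $\gcd(p,q)$ as a period of $v$, and since $\gcd(p,q)$ divides $q$ while minimality of $q$ forces every period to be at least $q$, you get $\gcd(p,q)=q$, i.e.\ $q\mid p$. Your self-contained alternative is also sound: in the key lemma (periods $p\ge q$ and $|v|\ge p+q$ imply period $p-q$), the forward chain $v[i]=v[i+p]=v[i+p-q]$ handles positions with $i+p\le |v|$, and for $i+p>|v|$ the hypothesis $|v|\ge p+q$ gives exactly $i>q$, so the backward-then-forward chain $v[i]=v[i-q]=v[i-q+p]$ stays within range; the Euclidean iteration is then legitimate at every step because the intermediate periods only shrink, so $|v|\ge 2p$ continues to dominate the required threshold $p'+q$. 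The only blemish is cosmetic: when $r=0$ your phrasing lets the final subtraction produce a ``period $0$,'' which is vacuous; this is harmless, since your contradiction argument invokes the iteration only when $r>0$. In short, you have supplied --- twice over --- the proof that the paper chose to omit.
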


Let $\Delta$ be some natural number. A period~$p$ of some word~$v$
is called {\it $\Delta$-period} if $p$ is divisible by~$\Delta$.
The minimal $\Delta$-period of~$v$, if exists, is denoted by $p_{\Delta}(v)$.
The word~$v$ is called $\Delta$-periodic if $|v|\ge 2p_{\Delta}(v)$.
It is obvious that any $\Delta$-periodic word is also periodic.
Proposition~\ref{divminper} can be generalized in the following way.

\begin{proposition}
Any $\Delta$-period~$p$ of a word~$v$ such that $|v|\ge 2p$
is divisible by $p_{\Delta}(v)$.
\label{minDeltaper}
\end{proposition}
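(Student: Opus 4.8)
The plan is to compare the given $\Delta$-period $p$ with the \emph{minimal} one $q:=p_{\Delta}(v)$ by passing to their greatest common divisor, and to show that $\gcd(p,q)$ is itself a $\Delta$-period of~$v$. Minimality of $q$ then forces $\gcd(p,q)=q$, which is exactly the divisibility $q\mid p$ asserted by the statement. First I would observe that $q=p_{\Delta}(v)$ is well defined: the hypothesis guarantees that $v$ possesses at least one $\Delta$-period, namely $p$ itself, so a minimal one exists, and by minimality $q\le p$.

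The key step is to apply Proposition~\ref{divminper} twice. Since $|v|\ge 2p$ and also $|v|\ge 2p\ge 2q$, that proposition yields $\per(v)\mid p$ and $\per(v)\mid q$. Hence $\per(v)$ divides $d:=\gcd(p,q)$, so $d$ is a multiple of the smallest period of~$v$; as any multiple of $\per(v)$ (not exceeding $|v|$) is again a period of~$v$, we conclude that $d$ is a period of~$v$. On the other hand, because $\Delta\mid p$ and $\Delta\mid q$ we have $\Delta\mid d$, so $d$ is in fact a $\Delta$-period of~$v$. Combining these, $d$ is a $\Delta$-period with $d\le q$, and minimality of $q$ gives $q\le d$; therefore $d=q$, i.e.\ $q=\gcd(p,q)$, which means $q\mid p$.

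The point I would flag as the crux is showing that $\gcd(p,q)$ is a period of~$v$ at all. The standard tool for turning two periods into their gcd is the Fine and Wilf theorem, and indeed the length bound $|v|\ge 2p\ge p+q-\gcd(p,q)$ would let one invoke it directly. What makes the generalization to $\Delta$-periods go through so cleanly, however, is that Fine and Wilf can be bypassed here: Proposition~\ref{divminper} already makes both $p$ and $q$ multiples of $\per(v)$, so their gcd is automatically a multiple of $\per(v)$ and hence a period, and the extra divisibility by $\Delta$ is immediate. The only care needed is to keep track that the two applications of Proposition~\ref{divminper} are legitimate, which is ensured by $q\le p$ and $|v|\ge 2p$.
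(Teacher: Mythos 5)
Your proof is correct, and all the delicate points check out: $q=p_{\Delta}(v)$ exists because $p$ itself is a $\Delta$-period; both applications of Proposition~\ref{divminper} are legitimate since $q\le p$ gives $|v|\ge 2p\ge 2q$; and $\gcd(p,q)$ is indeed a $\Delta$-period, being a multiple of $\per(v)$ not exceeding $|v|$ and divisible by $\Delta$, so minimality forces $\gcd(p,q)=q$, i.e.\ $q\mid p$. The paper argues dually, through the least common multiple rather than the greatest common divisor: it applies Proposition~\ref{divminper} once to $p$, deduces that $LCM(\per(v),\Delta)$ divides $p$ (since both $\per(v)$ and $\Delta$ do), observes that $LCM(\per(v),\Delta)$ is itself a $\Delta$-period of $v$, and concludes that $p_{\Delta}(v)=LCM(\per(v),\Delta)$, from which the divisibility follows. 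The paper's route buys a stronger byproduct: it pins down $p_{\Delta}(v)$ exactly as $LCM(\per(v),\Delta)$, not merely the divisibility relation you prove. Your route, on the other hand, is arguably the more airtight as written: the paper's step ``thus $p_{\Delta}(v)=LCM(\per(v),\Delta)$'' silently needs the lower bound $LCM(\per(v),\Delta)\le p_{\Delta}(v)$, which is obtained only by applying Proposition~\ref{divminper} to $p_{\Delta}(v)$ as well --- precisely the second application that you make explicit --- whereas your gcd-plus-minimality argument never needs to identify $p_{\Delta}(v)$ at all. Your closing remark is also well taken: the Fine--Wilf theorem would suffice here (since $|v|\ge 2p\ge p+q-\gcd(p,q)$), but Proposition~\ref{divminper} makes it unnecessary, which is exactly what keeps both proofs short.
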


\begin{proof}
By Proposition~\ref{divminper}, period~$p$ is divisible by $\per(v)$,
so $p$ is divisible by $LCM(\per(v),\Delta)$. On the other hand,
$LCM(\per(v),\Delta)$ is a $\Delta$-period of~$v$. Thus,
$p_{\Delta}(v)=LCM(\per(v),\Delta)$, and $p$ is divisble by $p_{\Delta}(v)$.
\end{proof}

Consider an arbitrary word $w=w[1 \dd n]$ of
length~$n$. Recall that any repetition~$y$ in~$w$ is extended to a unique
maximal repetition $r$ with the same minimal period. We call $r$ the {\it extension} of~$y$.

Let $r$ be a repetition in the word~$w$. We call any factor of~$w$ of
length $\per(r)$ which is contained in~$r$ a {\it cyclic root} of~$r$. For
cyclic roots we have the following property proved, e.g.,
in~\cite[Proposition~2]{KPPHr13}.

\begin{proposition}
Two cyclic root $u'$, $u''$ of a repetition~$r$ are equal if and only if
$\beg(u')\equiv \beg(u'')\pmod{\per(r)}$.
\label{oncycroot}
\end{proposition}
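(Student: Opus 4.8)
The plan is to encode both copies as rotations of a single cyclic word determined by one period of $r$, so that equality of copies becomes equality of rotations. Write $r\equiv w[a\dd b]$ and $p=\per(r)$, so that $|r|=b-a+1\ge 2p$ (exponent at least $2$) and $p$ is a period of $r$. Iterating the relation $w[k]=w[k+p]$ (valid whenever $k,k+p\in[a,b]$, each step staying in range because the intermediate indices lie between $k$ and $k+p$) shows that $w[k]$ depends only on $(k-a)\bmod p$; hence, setting $c(x)=w[a+x]$ for $0\le x<p$, we have $w[k]=c((k-a)\bmod p)$ for every $k\in[a,b]$. A cyclic root $u$ with $\beg(u)=i$ then satisfies $u[t]=w[i+t]=c((i-a+t)\bmod p)$ for $0\le t<p$, i.e.\ $u$ is precisely the rotation of the cyclic word $c$ starting at residue $(i-a)\bmod p$.

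With this encoding the implication ``$\beg(u')\equiv\beg(u'')\pmod{p}\Rightarrow u'=u''$'' is immediate: congruent start positions give the same residue $(i-a)\bmod p$, hence the same rotation of $c$, hence the same word. So the first half of the equivalence costs nothing.

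For the converse I would argue by contradiction, and this is where the minimality of $p$ enters. Suppose $u'=u''$ but $\beg(u')\not\equiv\beg(u'')\pmod{p}$, and let $e$ be the residue $(\beg(u'')-\beg(u'))\bmod p$, so $0<e<p$. Equality of the two rotations of $c$ means $c(x)=c(x+e)$ for all $x$ taken modulo $p$, i.e.\ $e$ is a cyclic period of $c$. Iterating, $c$ is invariant under shifting by every multiple of $e$ modulo $p$; since those multiples are exactly the multiples of $g=\gcd(e,p)$, we obtain $c(x)=c(x+g)$ for all $x$, with $g\mid p$ and $g<p$. Translating back through $w[k]=c((k-a)\bmod p)$ gives $w[k]=w[k+g]$ for all $k,k+g\in[a,b]$, so $g$ is a period of $r$ strictly smaller than $p=\per(r)$, a contradiction. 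Hence $\beg(u')\equiv\beg(u'')\pmod{p}$, completing the equivalence.

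The only delicate step, and the one I expect to be the main obstacle, is this converse: a priori the two copies could coincide ``accidentally'', and excluding that is exactly what requires $p$ to be the \emph{smallest} period. The technical heart is passing from an arbitrary cyclic period $e$ of $c$ to the divisor $g=\gcd(e,p)$, so that the induced invariance $g\mid p$ propagates to a genuine period of all of $r$ below $p$. (One could instead avoid the cyclic-word language: assuming $\beg(u')\le\beg(u'')$, the factor $f\equiv w[\beg(u')\dd \beg(u'')+p-1]$ has both period $p$ and period $d=\beg(u'')-\beg(u')$, so by Fine--Wilf it has period $\gcd(p,d)$, which then propagates to all of $r$ through its period $p$; but the rotation argument seems cleaner and keeps the role of minimality transparent.)
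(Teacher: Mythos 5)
Your proof is correct. There is, however, no in-paper argument to compare it against: the paper does not prove Proposition~\ref{oncycroot} but imports it from \cite[Proposition~2]{KPPHr13}, so your text has to be judged as a self-contained substitute for that citation, and it succeeds. The encoding $w[k]=c\bigl((k-a)\bmod p\bigr)$ is legitimate for any word of period $p$ (each step of the iteration $w[k]=w[k+p]$ stays inside $r$), and it makes the ``if'' direction immediate. The ``only if'' direction is the substantive one, and you get it right: equality of the two rotations yields a cyclic period $e$ with $0<e<p$; the shifts fixing $c$ form a subgroup of the additive group of residues modulo $p$, hence contain $g=\gcd(e,p)$; and transferring this invariance back through the encoding exhibits $g<\per(r)$ as a genuine period of all of $r$, contradicting minimality of $\per(r)$. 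Two side remarks. First, your argument never uses that $\exp(r)\ge 2$, so you in fact prove the statement for length-$\per(r)$ factors of an arbitrary word; this extra generality is harmless. Second, the Fine--Wilf alternative you sketch in the closing parenthesis is a correct equivalent route (the overlap word $f$ does have periods $p$ and $d$ and length $p+d\ge p+d-\gcd(p,d)$, so the theorem applies, and the resulting period propagates to $r$ because $f$ contains a full window of length $p$); either phrasing would serve as a proof of the proposition.
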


\section{Number of maximal repeats and subrepetitions}
\label{counting}

In this section, we obtain an improved upper bound on the number of maximal gapped repeats and
subrepetitions in a string $w$.
Following the general approach of \cite{KPPHr13}, we split all maximal gapped repeats into three categories according
to periodicity properties of repeat's copy: periodic, semiperiodic and
ordinary repeats. Bounds for periodic and semiperiodic repeats
are directly borrowed from \cite{KPPHr13}, while for ordinary repeats, we
obtain a better bound.

\paragraph{\it Periodic repeats.}
We say that a maximal gapped
repeat is {\it periodic} if its copies are periodic strings (i.e. of exponent at least $2$). The set
of all periodic maximal $\alpha$-gapped repeats in $w$ is denoted by
${\cal PP}_{\alpha}$. The following bound on
the size of ${\cal PP}_{\alpha}$ was been obtained in \cite[Corollary~6]{KPPHr13}.

\begin{lemma}
$|{\cal PP}_k|=O(kn)$ for any natural $k>1$.
\label{onPP}
\end{lemma}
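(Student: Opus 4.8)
The plan is to charge every periodic maximal $\alpha$-gapped repeat to the runs into which its two copies extend, and then to count these repeats run by run using Theorem~\ref{sumexp}. Let $\rep=(u',u'')$ be a periodic maximal $\alpha$-gapped repeat and put $d=\per(u')=\per(u'')$ (the two smallest periods coincide because $u'=u''$). Since each copy has exponent at least~$2$, it is a repetition and so extends to a unique run; call these runs $R'\supseteq u'$ and $R''\supseteq u''$, both with smallest period $d$. I would split the count into the case $R'\equiv R''$ (both copies lie in the same run) and the case $R'\not\equiv R''$ (the copies lie in distinct runs of equal smallest period).

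For $R'\equiv R''=:R$, the copies $u',u''$ are equal factors of $R$ of length $\geq d$, so by Proposition~\ref{oncycroot} their start positions are congruent modulo $d$, whence $d\mid p(\rep)$. Because $p(\rep)$ is then a multiple of the period $d$ of $R$, any letter of $R$ equals the letter $p(\rep)$ positions to its right whenever both lie inside $R$; consequently maximality forces $\beg(u')=\beg(R)$ and $\endd(u'')=\endd(R)$, since otherwise the repeat could be extended inside $R$. The repeat is therefore determined by $R$ together with its period $p(\rep)$, which is a multiple of $d$ strictly smaller than $|R|$. There are fewer than $|R|/d=\exp(R)$ such periods, so this case contributes $O\!\bigl(\sum_R \exp(R)\bigr)=O(n)$ repeats by Theorem~\ref{sumexp} --- already within the claimed bound and without any dependence on $\alpha$.

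The case $R'\not\equiv R''$ is where the factor $\alpha$ enters and where the main difficulty lies. Here $p(\rep)\le\alpha\, c(\rep)\le\alpha|R'|$, so $R''$ must begin within distance $\alpha|R'|$ of $R'$; moreover two distinct runs of the same smallest period $d$ can overlap in fewer than $d$ positions (otherwise Proposition~\ref{divminper} would merge them), so their start positions are at least $d$ apart and only $O(\alpha|R'|/d)=O(\alpha\exp(R'))$ runs $R''$ are eligible for a given $R'$. The obstacle is that a single pair $(R',R'')$ may host as many as $\Theta(\exp(R'))$ distinct maximal repeats: sliding the copies by a multiple of $d$ produces a different period $p(\rep)$ and a genuinely maximal repeat, so charging one unit per eligible pair overcounts, while summing over periods naively reintroduces a logarithmic factor (the source of the earlier $O(\frac{n}{\delta}\log n)$-type bounds). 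I would resolve this by charging each repeat to its left run $R'$ together with its period $p(\rep)$, exploiting the inequality $c(\rep)\in[\,p(\rep)/\alpha,\,p(\rep))$ to tie each repeat's copy length to its period, and using the $d$-separation of equal-period runs to show that the total charge to a fixed $R'$ telescopes to $O(\alpha\exp(R'))$ rather than $O(\alpha\exp(R')\log n)$. Summing over all runs and applying Theorem~\ref{sumexp} once more would give $O\!\bigl(\alpha\sum_R\exp(R)\bigr)=O(\alpha n)$. I expect this last aggregation --- eliminating the logarithmic factor from the distinct-run case --- to be the main obstacle of the proof.
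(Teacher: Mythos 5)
First, a point of reference: the paper does not prove Lemma~\ref{onPP} at all --- it imports it from \cite[Corollary~6]{KPPHr13} --- so there is no in-text proof to compare against, and your proposal must stand on its own. Its first half does: the same-run case is correct and complete, and in fact reproduces the analysis the paper itself performs for PR-repeats in Section~\ref{algorithm} (cf.\ Corollary~\ref{corrgenreps}): divisibility of $p(\rep)$ by $d$ via Proposition~\ref{oncycroot}, the forced equalities $\beg(u')=\beg(R)$ and $\endd(u'')=\endd(R)$, and a count of fewer than $\exp(R)$ repeats per run, which sums to $O(n)$ by Theorem~\ref{sumexp}.

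The distinct-run case, however, contains a genuine gap, and it is exactly where the content of the lemma lies. Your setup (the $d$-separation of equal-period runs, the $O(\alpha\exp(R'))$ bound on eligible runs $R''$) is sound, but the decisive step --- ``the total charge to a fixed $R'$ telescopes to $O(\alpha\exp(R'))$'' --- is asserted, not proved, and you yourself flag it as the main obstacle. This step is not a routine verification, and the charging scheme as you state it cannot work literally. When the copies lie in distinct runs, $p(\rep)$ need \emph{not} be a multiple of $d$: in $w=\bigl((\texttt{ab})^e\texttt{c}\bigr)^m$, a prefix of one run matched against a suffix of another gives maximal periodic repeats whose periods are odd, and a single pair of runs carries up to $\Theta(e)$ of them. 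Consequently ``left run plus period'' ranges over up to $\Theta(\alpha|R'|)$ values per run; since $\sum_R|R|$ can be $\Theta(n\log n)$ (Fibonacci-type words), this charge by itself reproduces precisely the logarithmic loss you want to avoid, and moreover the map from repeats to pairs $(R',p(\rep))$ need not even be injective. A correct completion requires a structural analysis that is absent from the proposal: maximality forces $\beg(u')=\beg(R')$ or $\beg(u'')=\beg(R'')$, and $\endd(u')=\endd(R')$ or $\endd(u'')=\endd(R'')$ (otherwise both copies extend inside their runs); in each of the four resulting configurations the admissible repeats for a fixed pair $(R',R'')$ form an arithmetic progression with common difference~$d$ (by Proposition~\ref{oncycroot}) whose length is controlled by $\min(|R'|,|R''|)$ and by the $\alpha$-gapped window; and these per-pair counts must then be summed over all candidate $R''$ using the near-disjointness of equal-period runs, with the configuration $u''\equiv R''$ charged symmetrically to the \emph{right} run. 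Without this, the proposal establishes only the $O(n)$ same-run bound, not the lemma.
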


\paragraph{\it Semiperiodic repeats.}
A maximal gapped repeat is called {\it prefix} ({\it suffix}) {\it semi\-periodic}
if the copies of this repeat are not periodic, but have a prefix (suffix)
which is periodic and its length is at least half of the copy length.
A maximal gapped repeat is {\it semiperiodic} if it is either prefix
or suffix semiperiodic. The set of all semiperiodic $\alpha$-gapped maximal
repeats is denoted by~${\cal SP}_{\alpha}$.
In~\cite[Corollary~8]{KPPHr13}, the following bound was obtained on the number
of semiperiodic maximal $\alpha$-gapped repeats.

\begin{lemma}[\cite{KPPHr13}]
$|{\cal SP}_k|=O(kn)$ for any natural $k>1$.
\label{onSP}
\end{lemma}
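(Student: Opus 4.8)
The plan is to reduce to prefix-semiperiodic repeats (the suffix-semiperiodic case being symmetric) and then charge each such repeat to a single run, using the bound $\mathcal{E}(w)=O(n)$ on the total exponent of all runs (Theorem~\ref{sumexp}) to sum the per-run contributions into the desired $O(\alpha n)$.

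First I would set up the association with runs. For a prefix-semiperiodic repeat $\pi=(u',u'')$, the left copy $u'$ has, by definition, a periodic prefix $P$ with $|P|\ge |u'|/2$; let $p=\per(P)$. Since $\exp(P)\ge 2$ we have $p\le |P|/2$, and $P$ extends to a unique maximal repetition (run) $r$ with minimal period $p$. I would take $P$ to be the \emph{longest} prefix of $u'$ of period $p$ and observe, using that $u'$ is not periodic, that $P$ cannot be all of $u'$; hence period $p$ breaks inside $u'$, and therefore $\endd(P)=\endd(r)$, a position fixed by $r$ alone. I charge $\pi$ to this run $r$, so that each repeat is charged exactly once.

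The core step is to bound, for a fixed run $r$ of period $p$ and exponent $e=|r|/p$, the number of repeats charged to it. The right copy $u''$ equals $u'$ as a word, so it carries the same periodic prefix $P$, whose occurrence lies in a run $r''$ of the same period $p$ at a phase congruent to that of $P$ in $r$ modulo $p$ (Proposition~\ref{oncycroot}). The $\alpha$-gapped condition gives displacement $p(\pi)\le\alpha\,c(\pi)=\alpha|u'|\le 2\alpha|P|\le 2\alpha|r|$, so $r''$ lies in a window of length $O(\alpha|r|)$ to the right of $r$. Standard periodicity arguments show that two distinct runs with the same minimal period $p$ cannot overlap by $p$ or more, so such a window contains only $O(\alpha|r|/p)=O(\alpha e)$ runs of period $p$; and for each candidate $r''$, the matching-phase requirement of Proposition~\ref{oncycroot} fixes the alignment up to $O(1)$ choices. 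Crucially, once $r$, $r''$, and the alignment are fixed, maximality determines the repeat uniquely (the copies extend left and right until the first mismatch), so no freedom remains in the start position of $u'$. Hence $r$ is charged $O(\alpha e)$ repeats.

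Summing over all runs and invoking Theorem~\ref{sumexp} finishes the argument: $|{\cal SP}_\alpha|\le\sum_{r}O(\alpha\,e_r)=O\!\big(\alpha\textstyle\sum_r e_r\big)=O(\alpha\,\mathcal{E}(w))=O(\alpha n)$. The main obstacle is exactly the per-run count: one must simultaneously exploit the phase constraint of Proposition~\ref{oncycroot}, the non-periodicity of the copy (to pin $\endd(P)$ to the end of $r$), and maximality (to kill the start-position degree of freedom), so that the only surviving freedom is the $O(\alpha e)$ choice of the companion run $r''$ inside the gapped window. The delicate points I expect to require the most care are verifying that the phase constraint really leaves $O(1)$ alignments per pair $(r,r'')$, and that the windowing is correctly expressed through $|P|$ (equivalently $|r|$) rather than $|u'|$, so that the $O(\alpha e)$ estimate survives and telescopes against the total exponent.
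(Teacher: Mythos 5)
First, a point of comparison: the paper itself contains no proof of this lemma --- it is imported verbatim from \cite[Corollary~8]{KPPHr13} --- so your attempt has to stand entirely on its own. Its skeleton is reasonable: charge each prefix-semiperiodic repeat $\rep\equiv(u',u'')$ to the run $r$ extending the longest period-$p$ prefix $P$ of $u'$ (your observation that non-periodicity of $u'$ forces $\endd(P)=\endd(r)$ is correct), bound the number of companion runs $r''$ of minimal period $p$ meeting a window of length $O(\alpha|r|)$ by $O(\alpha e)$, and sum $O(\alpha e_r)$ over all runs via Theorem~\ref{sumexp}. The genuine gap is exactly at the step you yourself flag as delicate: the claim that ``the matching-phase requirement of Proposition~\ref{oncycroot} fixes the alignment up to $O(1)$ choices.'' Proposition~\ref{oncycroot} concerns cyclic roots of a \emph{single} repetition, and even granting its natural cross-run analogue, congruence of phases pins the placement of $P''$ (the copy of $P$ at the start of $u''$) inside $r''$ only \emph{modulo} $p$: there remain up to $\exp(r'')$ admissible placements of $P''$ in $r''$, shifted by multiples of $p$, and $\exp(r'')$ is not $O(1)$. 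As written, your per-run count is therefore (number of companion runs) times $\max_{r''}\exp(r'')$, which does not telescope to $O(\alpha n)$; the phase constraint alone cannot close this.

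The missing idea is to apply on the right side the same period-break argument you already use on the left. Since $u'$ is not periodic, $u'=Pt$ with $t$ nonempty, and the first letter of $t$ violates period $p$, i.e.\ $w[\endd(P)+1]\neq w[\endd(P)+1-p]$. Because $u''=u'$ letter by letter, the letter of $w$ immediately following $P''$ equals $t[1]$ and violates period $p$ as well; since every position strictly inside a run of minimal period $p$ satisfies that period, this forces $\endd(P'')=\endd(r'')$. Hence for a fixed pair $(r,r'')$ the alignment is unique (not merely $O(1)$): the period of the repeat must equal $\endd(r'')-\endd(r)$. Maximality then gives at most one repeat per pair, because for a fixed period $q$ the left copies of maximal repeats are exactly the maximal intervals on which $w[i]=w[i+q]$, and at most one such interval contains the position $\endd(r)$, which the left copy of every repeat charged to $r$ does contain. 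With this substitution --- uniqueness via the period-break on the right rather than via Proposition~\ref{oncycroot} --- your count of $O(\alpha e)$ repeats charged to $r$ and the final summation $\sum_r O(\alpha e_r)=O(\alpha\,{\cal E}(w))=O(\alpha n)$ go through.
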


\paragraph{\it Ordinary repeats.}
Maximal gapped repeats which are neither periodic nor semiperiodic are called
{\it ordinary}. The set of all ordinary maximal $\alpha$-gapped repeats in
the word~$w$ is denoted by~${\cal OP}_{\alpha}$. In the rest of this
section, we prove that the cardinality of ${\cal OP}_{\alpha}$ is
$O(\alpha n)$.
%
For simplicity, assume that $\alpha$ is an
integer number $k$.

To estimate the number of ordinary maximal $k$-gapped repeats, we
use the following idea from~\cite{Kolpakov12}.
We represent
a maximal repeat $\rep\equiv (u', u'')$ from ${\cal OP}_k$ by a triple
$(i, j, c)$ where $i=\beg(u')$, $j=\beg(u')$ and
$c=c(\rep)=|u'|=|u''|$. Such triples will be called {\em points}.
Obviously, $\rep$ is uniquely defined by
values $i$, $j$ and~$c$, therefore two different repeats from ${\cal OP}_k$ can
not be represented by the same point.

For any two points $(i', j', c')$, $(i'', j'', c'')$ we say that point
$(i', j', c')$ {\it covers} point $(i'', j'', c'')$ if $i'\le i''\le
i'+c'/6$, $j'\le j''\le j'+c'/6$, $c'\ge c'' \ge \frac{2c'}{3}$.
A point is {\it covered} by a repeat $\rep$ if
this it is covered by the point representing~$\rep$. By $V[\rep ]$ we
denote the set of all points covered by a repeat~$\rep$. We show that any
point can not be covered by two different repeats from ${\cal OP}_k$.

\begin{lemma}
Two different repeats from ${\cal OP}_k$ cannot cover the same point.
\label{keylemma}
\end{lemma}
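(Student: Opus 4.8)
The plan is to argue by contradiction. Suppose two distinct repeats $\rep_1,\rep_2\in{\cal OP}_k$, represented by points $(i_1,j_1,c_1)$ and $(i_2,j_2,c_2)$, both cover a common point $(i,j,c)$, and write $p_1=j_1-i_1$, $p_2=j_2-i_2$ for their periods. First I would extract the elementary consequences of the covering inequalities. From $\frac{2c_\ell}{3}\le c\le c_\ell$ one gets $c\le\min(c_1,c_2)\le\max(c_1,c_2)=:M\le\frac32 c$, so the two copy lengths agree up to a factor $\frac32$. From $i-\frac{c_\ell}{6}\le i_\ell\le i$ (and likewise for the $j_\ell$) one gets $i_1,i_2\in[i-\frac{M}{6},i]$ and $j_1,j_2\in[j-\frac{M}{6},j]$, whence $|p_1-p_2|\le\frac{c_1+c_2}{6}\le\frac{M}{3}$. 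These three facts --- comparable lengths, nearly-equal start positions, and nearly-equal periods --- are exactly what the constants $\frac16,\frac23$ are tuned to deliver.

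Next I would dispose of the case $p_1=p_2=:p$. Here the left copies overlap (their starts differ by at most $\frac{M}{6}$ while each has length $\ge\frac23 M$) in an interval $[a\dd b]$ whose length is smaller than $p$, since each copy is shorter than its period. Then $(w[a\dd b],w[a+p\dd b+p])$ is a gapped repeat of period $p$ contained in both $\rep_1$ and $\rep_2$. As every $\alpha$-gapped repeat lies in a unique maximal $\alpha$-gapped repeat of the same period (recalled in the introduction), this forces $\rep_1=\rep_2$, a contradiction.

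The substance is the case $p_1\ne p_2$; assume $p_1<p_2$ and set $d=p_2-p_1\in\{1,\dots,\lfloor\frac{M}{3}\rfloor\}$. For $t$ in the overlap of the left copies we have both $w[t]=w[t+p_1]$ and $w[t]=w[t+p_2]$, hence $w[s]=w[s+d]$ for $s=t+p_1$; this places period $d$ on a factor lying in the region of the right copy of $\rep_1$, of length equal to the left-copy overlap plus $d$. Symmetrically, the overlap of the right copies yields period $d$ on a factor in the region of the left copy of $\rep_2$. The key observation is that $d=(j_2-j_1)+(i_1-i_2)>0$ rules out having simultaneously $i_1<i_2$ and $j_2<j_1$; thus either $i_1\ge i_2$, in which case the first period-$d$ factor begins exactly at the start position of the right copy of $\rep_1$ and is therefore a prefix of it, or $j_2\ge j_1$, in which case the second factor is a prefix of the left copy of $\rep_2$.

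It remains to verify the quantitative claims, which I expect to be the only delicate point. Writing $O$ for the relevant copy overlap, the estimate $\min(i_1+c_1,i_2+c_2)\ge\min(i_1,i_2)+\min(c_1,c_2)\ge(i-\frac{M}{6})+\frac23 M$ gives $O\ge\frac{M}{2}$, and the symmetric bound holds for the right copies. Hence the periodic prefix found above has length $\ge\min(O+d,c_\ell)>\frac12 c_\ell$, and since $O\ge\frac{M}{2}>\frac{M}{3}\ge d$ its exponent $\frac{O+d}{d}\ge 2$ (or it is the whole copy, whose exponent is $\ge c_\ell/d\ge 2$ because $d\le\frac{M}{3}\le\frac12 c_\ell$). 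Thus one of the copies is either periodic or carries a periodic prefix of length at least half the copy, making the corresponding repeat periodic or prefix-semiperiodic --- contradicting its membership in ${\cal OP}_k$. The main obstacle is precisely this bookkeeping: checking that the chosen constants simultaneously force the period-$d$ factor to reach exponent $2$ and to sit, via the sign argument on $d$, as a prefix of length exceeding half a copy.
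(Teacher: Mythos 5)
Your proof is correct, and it shares the paper's skeleton: from the covering conditions you extract that $c_1,c_2$ agree within a factor $\frac{3}{2}$, that the start positions nearly coincide, and that the periods differ by at most $\frac{c_1+c_2}{6}$; you dismiss $p_1=p_2$ by maximality; and for $p_1\neq p_2$ you convert $d=|p_1-p_2|$ into a short period on a long piece of a copy, contradicting that the repeat is ordinary. The execution of this last step, however, is genuinely different. The paper splits on whether one left copy is contained in the other: containment gives two occurrences of the smaller repeat's copy shifted by $d$, so that whole copy is periodic and the smaller repeat is a periodic repeat; proper overlap gives period $d$ on the overlap of the left copies itself, which is a suffix of one of them of length at least $\frac{5}{6}$ of that copy, so that repeat is periodic or \emph{suffix} semiperiodic. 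You avoid the containment dichotomy altogether: transporting the left-copy overlap by $+p_1$ and the right-copy overlap by $-p_2$ gives two period-$d$ factors, and the sign identity $d=(j_2-j_1)+(i_1-i_2)>0$ forces at least one of them to start exactly at the start position of a copy, producing a periodic \emph{prefix} of length greater than half that copy, so the repeat is periodic or \emph{prefix} semiperiodic. Both case analyses fit inside the covering constants with room to spare (the paper bounds the period difference by $\frac{5}{12}$ of the shorter copy and the overlap by $\frac{5}{6}$ of a copy; you use $\frac{M}{3}$ and $\frac{M}{2}$ respectively); yours is more symmetric and runs on a single mechanism, while the paper's has the small economy of never examining the right-copy overlap. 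One point worth tightening: in the $p_1=p_2$ case, your auxiliary repeat $(w[a\dd b],w[a+p\dd b+p])$ has copy length about $\frac{M}{2}$ but period possibly close to $kM$, so it need not be $k$-gapped, and the statement you cite from the introduction also has the alternative branch of containment in a maximal repetition; the robust justification (which is what the paper's one-line ``due to maximality'' amounts to) is that the maximal interval $I\supseteq[a\dd b]$ on which $w[s]=w[s+p]$ holds for every $s\in I$ is unique, and each of the two left copies, being delimited on both sides by mismatches under the shift $p$, must coincide with $I$, whence the two repeats coincide.
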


\begin{proof}
Let $\rep_1\equiv (u'_1, u''_1)$, $\rep_2\equiv (u'_2, u''_2)$ be two
different repeats from ${\cal OP}_k$ covering the same point $(i, j, c)$.
Denote $c_1=c(\rep_1)$, $c_2=c(\rep_2)$, $p_1=\per(\rep_1)$,
$p_2=\per(\rep_2)$. Without loss of generality we assume $c_1\ge c_2$. From
$c_1\ge c\ge \frac{2c_1}{3}$, $c_2\ge c\ge \frac{2c_2}{3}$ we have $c_1\ge
c_2\ge \frac{2c_1}{3}$, i.e. $c_2\le c_1\le\frac{3c_2}{2}$. Note that $w[i]$
is contained in both left copies $u'_1, u'_2$, i.e. these copies
overlap. If $p_1=p_2$, then repeats $\rep_1$ and
$\rep_2$ must coincide due to the maximality of these
repeats. Thus, $p_1\neq p_2$. Denote $\Delta =|p_1-p_2|>0$. From $\beg(u'_1)\le i\le \beg(u'_1)+c_1/6$ and $\beg(u''_1)\le j\le \beg(u''_1)+c_1/6$ we have
$$
(j-i)-c_1/6\le p_1\le (j-i)+c_1/6.
$$
Analogously, we have
$$
(j-i)-c_2/6\le p_2\le (j-i)+c_2/6.
$$
Thus $\Delta\le (c_1+c_2)/6$ which, together with inequality $c_1\le\frac{3c_2}{2}$,
implies $\Delta\le\frac{5c_2}{12}$.

First consider the case when one of the copies $u'_1, u'_2$ is contained in
the other, i.e. $u'_2$ is contained in $u'_1$. In this case, $u''_1$
contains some factor $\widehat u''_2$ corresponding to the factor $u'_2$ in
$u'_1$. Since $\beg(u''_2)-\beg(u'_2)=p_2$, $\beg(\widehat
u''_2)-\beg(u'_2)=p_1$ and $u''_2=\widehat u''_2=u'_2$, we have
$$
|\beg(u''_2)-\beg(\widehat u''_2)|=\Delta,
$$
so $\Delta$ is a period of $u''_2$ such that
$\Delta\le\frac{5}{12}c_2=\frac{5}{12}|u''_2|$. Thus, $u''_2$ is periodic
which contradicts that $\rep_2$ is not periodic.

Now consider the case when $u'_1, u'_2$ are not contained in one another.
Denote by $z'$ the overlap of  $u'_1$ and $u'_2$. Let $z'$ be a suffix of
$u'_k$ and a prefix of $u'_l$ where $k,l=1, 2$, $k\neq l$. Then $u''_k$
contains a suffix $z''$ corresponding to the suffix $z'$ in $u'_k$, and
$u''_l$ contains a prefix $\widehat z''$ corresponding to the prefix $z'$ in
$u'_l$. Since $\beg(z'')-\beg(z')=p_k$ and $\beg(\widehat
z'')-\beg(z')=p_l$ and $z''=\widehat z''=z'$, we have
$$
|\beg(z'')-\beg(\widehat z'')|=|p_k-p_l|=\Delta,
$$
therefore $\Delta$ is a period of $z'$. Note that in this case
$$
\beg(u'_k)<\beg(u'_l)\le i \le \beg(u'_k)+c_k/6,
$$
therefore $0<\beg(u'_l)-\beg(u'_k)\le c_k/6$. Thus
$$
|z'|=c_k-(\beg(u'_l)-\beg(u'_k))\ge\frac{5}{6}c_k\ge\frac{5}{6}c_2.
$$
From $\Delta\le\frac{5}{12}c_2$ and $c_2\le\frac{6}{5}|z'|$ we obtain
$\Delta\le |z'|/2$. Thus, $z'$ is a periodic suffix of $u'_k$ such that
$|z'|\ge \frac{5}{6}|u'_k|$, i.e. $\rep_k$ is either suffix semiperiodic or
periodic which contradicts $\rep_k\in {\cal OP}_k$.
\end{proof}

Denote by ${\cal Q}_k$ the set of all points $(i, j, c)$ such that $1\le i,
j, c\le n$ and $i<j\le i+(\frac{3}{2}k+\frac{1}{4})c$.

\begin{lemma}
Any point covered by a repeat from ${\cal OP}_k$ belongs to ${\cal Q}_k$.
\label{keylemma1}
\end{lemma}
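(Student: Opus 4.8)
The plan is to reduce the lemma to a short two-sided estimate, since all the combinatorial content has already been isolated in the definitions of ${\cal OP}_k$ and of the covering relation. First I would fix notation: let $(i,j,c)$ be an arbitrary point covered by a repeat $\rep\equiv(u',u'')\in{\cal OP}_k$, and let $(i_0,j_0,c_0)$ be the point representing $\rep$, so that $i_0=\beg(u')$, $j_0=\beg(u'')$ and $c_0=c(\rep)$. Write $p_0=p(\rep)$ for the period of $\rep$; since the distance between the starts of the two copies equals the period, $p_0=j_0-i_0$. Unfolding the covering relation gives
\[
i_0\le i\le i_0+\tfrac{c_0}{6},\qquad j_0\le j\le j_0+\tfrac{c_0}{6},\qquad \tfrac{2}{3}c_0\le c\le c_0 .
\]
The coordinate constraints $1\le i,j,c\le n$ required by ${\cal Q}_k$ are built into the notion of a point (points are grid triples in $\{1,\dots,n\}^3$), so they hold automatically, and the only real work is to establish the inequality $i<j\le i+\bigl(\tfrac{3}{2}k+\tfrac14\bigr)c$.

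Next I would bound $j-i$ from both sides using the margins on $i$ and $j$, obtaining
\[
p_0-\tfrac{c_0}{6}\ \le\ j-i\ \le\ p_0+\tfrac{c_0}{6}.
\]
For the strict inequality $i<j$ I would invoke that the gap is nonempty, so $p_0=|u|+|v|>|u|=c_0$; hence $j-i\ge p_0-\tfrac{c_0}{6}>\tfrac{5}{6}c_0>0$. For the upper bound I would feed in the two remaining hypotheses, namely the $k$-gapped condition $p_0\le k c_0$ and the size relation $c_0\le\tfrac{3}{2}c$ coming from $c\ge\tfrac{2}{3}c_0$. Chaining these yields
\[
j-i\le p_0+\tfrac{c_0}{6}\le\bigl(k+\tfrac16\bigr)c_0\le\bigl(k+\tfrac16\bigr)\tfrac{3}{2}c=\bigl(\tfrac{3}{2}k+\tfrac14\bigr)c,
\]
which is precisely the defining inequality of ${\cal Q}_k$.

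I do not expect any genuine obstacle: each hypothesis supplied by ${\cal OP}_k$ and the covering relation gets used exactly once, and the margin constant $\tfrac16$ was evidently chosen so that it combines with the factor $\tfrac{3}{2}$ (from $c_0\le\tfrac{3}{2}c$) to produce the clean coefficient $\tfrac{3}{2}k+\tfrac14$. The only care needed is bookkeeping: keeping straight that the factor $c_0$ in the covering inequalities belongs to the covering point (the repeat's point), and tracking the directions of the estimates so that the lower estimate is used for $i<j$ and the upper one for the ${\cal Q}_k$ bound. Note that neither periodicity nor the ``ordinary'' status of $\rep$ is actually needed here; only that $\rep$ is $k$-gapped. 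Those finer properties enter only in Lemma~\ref{keylemma}.
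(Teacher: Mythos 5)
Your proof is correct and follows essentially the same route as the paper's: unfold the covering inequalities, use $\beg(u'')-\beg(u')=p(\rep)\le k\,c(\rep)$ together with $c(\rep)\le\frac{3}{2}c$ to obtain $j\le i+(\frac{3}{2}k+\frac{1}{4})c$, with $i<j$ following trivially. The only cosmetic difference is that the paper verifies $1\le i,j,c\le n$ by noting that $w[i]$ and $w[j]$ lie inside the copies $u'$ and $u''$ (rather than treating these bounds as built into the definition of a point), an equally trivial observation.
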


\begin{proof}
Let a point $(i, j, c)$ be covered by some repeat $\rep\equiv (u', u'')$
from ${\cal OP}_k$. Denote $c'=c(\rep )$. Note that $w[i]$ and $w[j]$ are
contained respectively in $u'$ and $u''$ and $n>c'\ge c\ge \frac{2c'}{3}>0$,
so inequalities $1\le i, j, c\le n$ and $i<j$ are obvious. Note also that
$$
j\le \beg(u'')+c'/6=\beg(u')+\per(\rep )+c'/6\le i+kc'+c'/6,
$$
therefore, taking into account $c'\le \frac{3c}{2}$, we have $j\le
i+(\frac{3}{2}k+\frac{1}{4})c$.
\end{proof}

From Lemmas~\ref{keylemma} and~\ref{keylemma1},  we obtain

\begin{lemma}
$|{\cal OP}_k|=O(nk)$.
\label{OPklemma}
\end{lemma}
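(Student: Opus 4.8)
The plan is to turn the two preceding lemmas into a disjoint-packing estimate. By Lemma~\ref{keylemma} the sets $V[\rep]$, $\rep\in{\cal OP}_k$, are pairwise disjoint, and by Lemma~\ref{keylemma1} they all lie inside ${\cal Q}_k$, so $\sum_{\rep\in{\cal OP}_k}|V[\rep]|\le|{\cal Q}_k|$. First I would record the two quantitative ingredients by counting lattice points. For a single repeat with copy length $c=c(\rep)$, the defining inequalities of covering leave roughly $c/6$ choices for the first coordinate, $c/6$ for the second, and $c/3$ for the third, giving $|V[\rep]|\ge c^3/108=\Omega(c(\rep)^3)$. For ${\cal Q}_k$, each of the $n$ values of $c$ and each of the $n$ values of $i$ admit at most $(\tfrac32 k+\tfrac14)c$ values of $j$, so $|{\cal Q}_k|=O(kn^3)$.

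The naive combination of these two bounds only yields $\sum_{\rep}c(\rep)^3=O(kn^3)$, and this is \emph{not} sufficient: a bound on the sum of cubes says nothing about the number of repeats, since repeats with short copies contribute almost nothing. This is the main obstacle. The way around it is to exploit that $\rep$ covers only points whose third coordinate lies in $[\tfrac23 c(\rep),c(\rep)]$, so each covering set is localized in the $c$-coordinate to a band comparable to $c(\rep)$. I would therefore group the repeats into dyadic levels, level $t$ consisting of those $\rep$ with $2^t\le c(\rep)<2^{t+1}$, for $0\le t\le\log_2 n$.

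Within level $t$ every covering set has size at least $2^{3t}/108$ and is contained in the slice ${\cal Q}_k^{(t)}$ of ${\cal Q}_k$ formed by points whose third coordinate lies in $[\tfrac13 2^{t+1},2^{t+1})$; these sets are pairwise disjoint by Lemma~\ref{keylemma}. A lattice-point count gives $|{\cal Q}_k^{(t)}|=O(kn\,2^{2t})$, since the band contains $O(2^t)$ values of $c$, each contributing $O(kn\,2^t)$ points. Dividing the slice size by the minimum covering size, the number of level-$t$ repeats is $O(kn\,2^{2t}/2^{3t})=O(kn\,2^{-t})$.

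The final step sums over levels. The decisive point is that the per-level bound decays geometrically, so $\sum_{t\ge 0}O(kn\,2^{-t})=O(kn)$, with no logarithmic factor. (Summing instead over exact copy lengths $c$ would give a per-value bound $O(kn/c)$ and hence the harmonic sum $O(kn\log n)$; it is precisely the dyadic grouping, together with the localization of covered points in the third coordinate, that removes the $\log n$ and gives $|{\cal OP}_k|=O(nk)$.) I expect the only fiddly part to be the lattice-point counts with their floor/ceiling rounding, but these are routine; the conceptual crux is the dyadic localization argument.
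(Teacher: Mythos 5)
Your proof is correct, and it reaches the bound by a route that differs from the paper's in its final counting step. Both arguments rest on the same two facts (Lemma~\ref{keylemma}: the sets $V[\rep]$ are pairwise disjoint; Lemma~\ref{keylemma1}: they lie in ${\cal Q}_k$) and on the same crucial localization, namely that a repeat with copy length $c'$ only covers points whose third coordinate is in $[\tfrac{2}{3}c',c']$. The paper exploits this by weighting each point $(i,j,c)$ with $\rho(i,j,c)=1/c^3$: the localization makes every covering set carry weight $\Omega(1)$ (via the estimate $\sum_{2c'/3\le c\le c'}1/c^3\ge\tfrac{5}{32}c'^{-2}$), while $\rho({\cal Q}_k)=O(nk)$ because $\sum_c 1/c^2$ converges; disjointness then gives $|{\cal OP}_k|=O(nk)$ in one stroke, with no need to stratify the repeats. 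You instead keep the counting unweighted and stratify the repeats into dyadic levels by copy length, packing the disjoint sets of level $t$ (each of size $\Omega(2^{3t})$) into the slice of ${\cal Q}_k$ of size $O(kn\,2^{2t})$, and then sum the geometric series $\sum_t 2^{-t}$. The two mechanisms are exact counterparts: your geometric series plays the role of the paper's convergent series $\sum 1/c^2$, and your (correct) observation that grouping by exact copy length would lose a $\log n$ factor is precisely the phenomenon the $1/c^3$ weight is designed to absorb. What each buys: your version is more elementary, using only lattice-point counting and no weighted sums or integral estimates; the paper's weighting is shorter, avoiding the level decomposition and the per-slice packing step altogether.
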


\begin{proof}
Assign to each point $(i, j, c)$ the weight $\rho (i,
j, c)=1/c^3$. For any finite set~$A$ of points, we define
$$
\rho (A)=\sum_{(i, j, c)\in A} \rho (i, j, c)=\sum_{(i, j, c)\in A}\frac{1}{c^3}.
$$
Let $\rep$ be an arbitrary repeat from ${\cal OP}_k$ represented by a point
$(i', j', c')$. Then
\begin{eqnarray*}
\rho (V[\rep ])&=&\sum_{i'\le i\le i'+c'/6}\;\sum_{j'\le j\le j'+c'/6}\;\sum_{2c'/3\le c\le c'}\frac{1}{c^3}\\
&>&\frac{c'^2}{36}\sum_{2c'/3\le c\le c'}\frac{1}{c^3}.
\end{eqnarray*}
Using a standard estimation of sums by integrals, one can deduce that
$\sum_{2c'/3\le c\le c'}\frac{1}{c^3}\ge \frac{5}{32}\frac{1}{c'^2}$ for any~$c'$.
Thus, for any $\rep$ from ${\cal OP}_k$
$$
\rho (V[\rep ])>\frac{1}{36}\frac{5}{36}=\Omega (1).
$$
Therefore,
\begin{equation}
\sum_{\rep\in {\cal OP}_k}\rho (V[\rep ])=\Omega (|{\cal OP}_k|).
\label{lowbndOP}
\end{equation}
Note also that
\begin{eqnarray*}
\rho ({\cal Q}_k)&\le &\sum_{i=1}^n\;\sum_{i<j\le i+(\frac{3}{2}k+\frac{1}{4})c}\;
\sum_{c=1}^n\frac{1}{c^3}\\
&<&n(\frac{3}{2}k+\frac{1}{4})c\sum_{c=1}^n\frac{1}{c^3}<2nk\sum_{c=1}^n\frac{1}{c^2}
<2nk\sum_{c=1}^{\infty}\frac{1}{c^2}=\frac{nk\pi^2}{3}.
\end{eqnarray*}
Thus,
\begin{equation}
\rho ({\cal Q}_k)=O(nk).
\label{upbndQ}
\end{equation}
By Lemma~\ref{keylemma1}, any point covered by repeats from ${\cal OP}_k$
belongs to ${\cal Q}_k$. On the other hand, by Lemma~\ref{keylemma}, each
point of ${\cal Q}_k$ can not be covered by two repeats from ${\cal OP}_k$.
Therefore,
 $$
\sum_{\rep\in {\cal OP}_k}\rho (V[\rep ])\le \rho ({\cal Q}_k).
$$
Thus, using \ref{lowbndOP} and~\ref{upbndQ}, we conclude that $|{\cal
OP}_k|=O(nk)$.
\end{proof}

Putting together Lemma~\ref{onPP}, Lemma~\ref{onSP}, and Lemma~\ref{OPklemma}, we
obtain that for any integer $k\ge 2$, the number of maximal $k$-gapped repeats
in~$w$ is $O(nk)$. The bound straightforwardly generalizes to the case of
real~$\alpha >1$. Thus, we conclude with

\begin{theorem}
For any $\alpha >1$, the number of maximal $\alpha$-gapped repeats
in~$w$ is $O(\alpha n)$.
\label{onPk}
\end{theorem}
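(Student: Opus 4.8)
The plan is to reduce the statement to the integer case and then simply assemble the three bounds already established. First I would fix $k=\lceil\alpha\rceil$ and observe that every maximal $\alpha$-gapped repeat is in particular a maximal $k$-gapped repeat. Indeed, maximality is defined purely by non-extendability (the conditions $w[i'-1]\neq w[i''-1]$ and $w[j'+1]\neq w[j''+1]$) and does not refer to $\alpha$ at all, while the gap condition $p(\rep)\le\alpha\,c(\rep)$ is monotone in the parameter: if $\rep$ is $\alpha$-gapped then it is $k$-gapped for every $k\ge\alpha$. Hence the set of maximal $\alpha$-gapped repeats is contained in the set of maximal $k$-gapped repeats, and it suffices to count the latter.

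Next I would use that for integer $k$ the three classes partition the maximal $k$-gapped repeats: $\mathcal{PP}_k$, $\mathcal{SP}_k$ and $\mathcal{OP}_k$ are disjoint and exhaust everything, because a repeat is \emph{ordinary} precisely when it is neither periodic nor semiperiodic. The total count is therefore $|\mathcal{PP}_k|+|\mathcal{SP}_k|+|\mathcal{OP}_k|$, and I would bound each summand by the corresponding lemma: Lemma~\ref{onPP} gives $|\mathcal{PP}_k|=O(kn)$, Lemma~\ref{onSP} gives $|\mathcal{SP}_k|=O(kn)$, and Lemma~\ref{OPklemma} gives $|\mathcal{OP}_k|=O(kn)$. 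Summing yields $O(kn)$ maximal $k$-gapped repeats, which settles the integer case.

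The passage back to real $\alpha$ then needs only a trivial estimate. Since $\alpha>1$ forces $k=\lceil\alpha\rceil\ge 2$ (so that the three lemmas, stated for natural $k>1$, apply) and $k<\alpha+1<2\alpha$, the bound $O(kn)$ becomes $O(\alpha n)$, as claimed. The only point I would double-check here is that the constants hidden in the three $O(kn)$ bounds are genuinely independent of $k$, so that both their sum and the substitution $k<2\alpha$ remain legitimate.

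I expect all the genuine difficulty to sit in the lemmas rather than in this assembly; the theorem itself is a routine combination. In particular, the crux is Lemma~\ref{OPklemma}, which in turn rests on Lemma~\ref{keylemma}: no single point is covered by two distinct ordinary repeats. The delicate argument there is the overlap case analysis, where one shows that a discrepancy $\Delta=|p_1-p_2|$ that is small relative to the copy length forces one copy to admit a period at most half its length on a long prefix, suffix, or overlap---making that repeat periodic or semiperiodic and contradicting membership in $\mathcal{OP}_k$. Relative to that, the present statement is essentially bookkeeping.
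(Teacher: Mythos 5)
Your proposal is correct and matches the paper's own argument: the paper likewise assumes $\alpha$ is an integer $k$, sums the bounds of Lemma~\ref{onPP}, Lemma~\ref{onSP}, and Lemma~\ref{OPklemma} over the partition into periodic, semiperiodic, and ordinary repeats, and then notes the bound ``straightforwardly generalizes'' to real $\alpha>1$. Your rounding step $k=\lceil\alpha\rceil<2\alpha$ (with the monotonicity observation that a maximal $\alpha$-gapped repeat is a maximal $k$-gapped repeat) is precisely the spelled-out version of that generalization, so the two proofs coincide.
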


Note that the bound of Theorem~\ref{onPk} is asymptotically tight. To see this, it is enough to
consider word $w_k=(0110)^k$. It is easy to check that for a big
enough~$\alpha$ and $k=\Omega (\alpha)$, $w_k$ contains $\Theta (\alpha
|w_k|)$ maximal $\alpha$-gapped repeats whose copies are single-letter words.

\bigskip
We now use Theorem~\ref{onPk} to obtain an upper bound on the number of
maximal $\delta$-subrepetitions.
The following proposition, shown in \cite[Proposition~3]{KPPHr13}, follows from the fact that each maximal $\delta$-subrepetition defines at least one maximal $1/\delta$-gapped repeat (cf. Introduction).

\begin{proposition}[\cite{KPPHr13}]
For $0<\delta <1$, the number of maximal $\delta$-subrepetitions in a string
is no more then the number of maximal $1/\delta$-gapped repeats.
\label{relforep}
\end{proposition}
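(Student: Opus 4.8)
The plan is to exhibit an explicit injective map from the set of maximal $\delta$-subrepetitions of $w$ into the set of maximal $1/\delta$-gapped repeats of $w$. Once such an injection is in hand, the bound follows immediately, since an injection witnesses that its source set is no larger than its target set.

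First I would describe the map. Let $x\equiv w[i \dd j]$ be a maximal $\delta$-subrepetition with minimal period $p=\per(x)$, and put $c=|x|-p$. Since $\exp(x)<2$ we have $|x|<2p$, hence $c<p$ and the gap length $p-c=2p-|x|$ is positive; since $\exp(x)\ge 1+\delta$ we have $|x|\ge(1+\delta)p$, hence $c=|x|-p\ge\delta p>0$. I then associate to $x$ the gapped repeat $\rep$ with left copy $w[i \dd i+c-1]$ and right copy $w[i+p \dd j]$, whose period is $p(\rep)=p$ and whose copy length is $c(\rep)=c$. The inequality $c\ge\delta p$ rewrites as $p\le\frac{1}{\delta}c$, that is $p(\rep)\le\frac{1}{\delta}c(\rep)$, so $\rep$ is a $1/\delta$-gapped repeat; the positive gap length confirms that $\rep$ is a genuine gapped repeat, and by construction $p(\rep)=\per(\rep)$.

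Next I would verify that $\rep$ is maximal and that the assignment is injective. For maximality I would simply unfold the two maximality conditions for $\rep$, written with $i'=i$, $i''=i+p$, $j'=i+c-1$, $j''=j$; using the identities $i''-1=(i-1)+p$ and $j'+1=(j+1)-p$, these conditions coincide verbatim with the two conditions defining maximality of a subrepetition of minimal period $p$, so $x$ maximal forces $\rep$ maximal. Injectivity is then immediate: the segment and period of $\rep$ recover the segment $w[i \dd j]$ together with its minimal period $p$, and since a maximal subrepetition is determined by its segment, two distinct maximal subrepetitions have distinct segments and hence distinct images.

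The routine parts here (the two exponent inequalities and the verbatim matching of the maximality conditions) are entirely mechanical. The one point deserving care — and the reason the Introduction phrases the correspondence as \emph{at least one} gapped repeat — is that a single subrepetition may a priori admit several admissible periods, and hence several associated gapped repeats; the main obstacle is therefore to pin down a canonical choice that keeps the map well defined and injective. I resolve this by always taking the minimal period $p=\per(x)$, which is the unique choice realizing $p(\rep)=\per(\rep)$ and which, depending on the segment alone, guarantees injectivity.
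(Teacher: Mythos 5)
Your proposal is correct and takes essentially the same route as the paper: the paper (citing \cite{KPPHr13}) obtains the bound from exactly this correspondence, namely that each maximal $\delta$-subrepetition defines a maximal $1/\delta$-gapped repeat $\rep$ with $p(\rep)=\per(\rep)$, which is injective since the repeat's segment and period recover the subrepetition. You have merely spelled out in full the verification (exponent inequalities, matching of maximality conditions, injectivity) that the paper leaves to the cited reference.
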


Theorem~\ref{onPk} combined with Proposition~\ref{relforep} immediately
imply the following upper bound for maximal $\delta$-subrepetitions
that improves the bound of \cite{KKOch} by a $\log n$ factor.

\begin{theorem}
For $0<\delta <1$, the number of maximal $\delta$-subrepetitions in~$w$ is $O(n/\delta)$.
\label{ondelrep}
\end{theorem}

The $O(n/\delta)$ bound on the number of maximal
$\delta$-subrepetitions is asymptotically tight, at least on an unbounded alphabet : word
$\texttt{ab}_1\texttt{ab}_2\ldots \texttt{ab}_k$
contains $\Omega (n/\delta)$  maximal
$\delta$-subrepetitions for $\delta\le 1/2$.

\section{Computing all maximal $\alpha$-gapped repeats}
\label{algorithm}

In this section, we present an $O(\alpha n+S)$ algorithm for computing
all maximal $\alpha$-gapped repeats in a word $w$. This bound has been recently
announced in \cite{Tanimuraetal}, here we present a different
solution. Together with the the $O(\alpha n)$ bound of Theorem~\ref{onPk}, this implies an $O(\alpha n)$-time
algorithm.

\subsection{Computing PR-repeats}
Some maximal $\alpha$-gapped repeats can be specifically located as defined below within maximal
repetitions (runs). For example, word $\texttt{cabababababaa}$ contains maximal
gapped repeats $(\texttt{a})\texttt{babababab}(\texttt{a})$,
 $(\texttt{aba})\texttt{babab}(\texttt{aba})$ and $(\texttt{ababa})\texttt{b}(\texttt{ababa})$
 within the run $\texttt{abababababa}=(\texttt{ab})^{11/2}$. In this section, 
 we describe the structure of such
repeats, and in particular those of them which are periodic (see Section~\ref{counting}), like the repeat
$(\texttt{ababa})\texttt{b}(\texttt{ababa})$ above. We show how those maximal
 $\alpha$-gapped repeats can be extracted from the runs.
Repeats which are located within runs but are not periodic will be found separately,
together with repeats (periodic or not) which are not located within runs. This part
will be described in the next section.

Let $\rep\equiv (u', u'')$ be a periodic gapped repeat.  If the
extensions of $u'$ and $u''$ are the same repetition~$r$ then we say that $r$
{\it generates} $\rep$ and we call $\rep$ {\it PR-repeat}
(abbreviating from {\em Periodic Run-generated}). Gapped repeats which are not PR-repeats are called {\it
non-PR repeats}.
We will use the following fact.

\begin{proposition}
Let $\rep\equiv (u', u'')$ be a maximal gapped repeat such that its copies
$u'$ and $u''$ contain a pair of corresponding factors having the same
extension~$r$. Then $\rep$ is generated by~$r$. \label{samext}
\end{proposition}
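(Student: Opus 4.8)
The plan is to let $s'$ and $s''$ denote the corresponding factors of $u'$ and $u''$ that share the common extension $r$, and to write $q=\per(r)$ and $p=p(\rep)$. The first step is to show that $q$ divides $p$. Since $s'$ and $s''$ are equal words and both are contained in $r=\text{ext}(s')=\text{ext}(s'')$, their prefixes of length $q$ are equal words and are cyclic roots of $r$ (they have length $\per(r)$ and lie in $r$, which is legitimate because $s'$ is periodic, so $|s'|\ge 2q$). By Proposition~\ref{oncycroot} equal cyclic roots have start positions congruent modulo $q$; as $\beg(s'')=\beg(s')+p$, this forces $q\mid p$. In particular, iterating the period $q$ of $r$ shows that $p$ is itself a period of $r$, since $|r|>p$.

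The heart of the argument is to prove that both copies lie inside $r$. Writing $u'\equiv w[e' \dd f']$ and $u''\equiv w[e'+p \dd f'+p]$, and noting that $\beg(r)\le\beg(s')$ and $\endd(s')+p\le\endd(r)$ because $s'$ and $s''$ are contained in $r$, it suffices to establish $e'\ge\beg(r)$ and $f'+p\le\endd(r)$. I would prove $e'\ge\beg(r)$ by contradiction: if $e'<\beg(r)$, then the position $t=\beg(r)-1$ lies in $u'$ and $t+p$ lies in $u''$, so equality of the two copies gives $w[t]=w[t+p]$; since $t+p$ and $t+q$ both lie in $r$ and differ by the multiple $p-q$ of $q$, the period $q$ of $r$ yields $w[t+p]=w[t+q]$, whence $w[\beg(r)-1]=w[\beg(r)-1+q]$, contradicting the left maximality of the run $r$ (valid since $\beg(r)>1$). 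A symmetric argument at the right end of $r$, comparing $w[\endd(r)+1]$ with its copy at $\endd(r)+1-p$ and transporting it to $\endd(r)+1-q$ through the period $q$, contradicts the right maximality of $r$ unless $f'+p\le\endd(r)$. I expect this containment step to be the main obstacle, as it is where the period $p$ of the repeat, the period $q$ of the run, and the maximality of $r$ must be combined correctly; the verification that all the positions invoked actually lie in $r$ and in the relevant copy is the delicate part.

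Once $u'\subseteq r$ and $u''\subseteq r$ are known, it remains to identify their extensions with $r$. Each copy inherits the period $q$ of $r$, so $\per(u')\le q$; and since the factor $s'\subseteq u'$ inherits the period $\per(u')$ with $|s'|\ge 2q\ge 2\per(u')$, Proposition~\ref{divminper} gives $q=\per(s')\mid\per(u')$, forcing $\per(u')=q$, and similarly $\per(u'')=q$. Thus each copy is a repetition of minimal period $q$ contained in the maximal repetition $r$ of the same minimal period, so by uniqueness of the extension its extension is exactly $r$. Hence $\rep$ is periodic and both of its copies extend to $r$, i.e. $r$ generates $\rep$, which is the assertion.
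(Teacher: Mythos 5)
Your proof is correct, and it is structurally parallel to the paper's: both arguments reduce the statement to showing that the copies $u'$, $u''$ are contained in~$r$, and both prove containment by contradiction at each end of~$r$, playing the letterwise equality of the two copies against the period and the maximality of~$r$. The difference is in the mechanics of the transport. The paper takes the \emph{two} positions $\beg(r)-1$ and $\beg(r)-1+\per(r)$, both of which lie inside $u'$ and carry distinct letters by maximality of~$r$, and shifts both of them by $p(\rep)$ into $u''$, where they land inside~$r$ at distance exactly $\per(r)$ from each other; the contradiction is then with the periodicity of~$r$, and no divisibility information is needed at all. You instead shift the \emph{single} boundary letter $w[\beg(r)-1]$ by $p(\rep)$ and then slide it back by multiples of $\per(r)$, which forces you to first establish $\per(r)\mid p(\rep)$ via the cyclic-root Proposition~\ref{oncycroot}; your contradiction is then with the maximality of~$r$ rather than with its periodicity. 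This detour is sound --- indeed the paper proves exactly this divisibility, by the same cyclic-root argument, in the lemma that follows Proposition~\ref{samext} --- but it is an extra step that the paper's version of the transport avoids. Conversely, your closing paragraph, which shows $\per(u')=\per(u'')=\per(r)$ via Proposition~\ref{divminper} and concludes that both extensions equal~$r$, makes explicit the reduction that the paper merely asserts (``it is enough to show that both copies $u'$ and $u''$ are contained in~$r$''), so on that point your write-up is more complete than the original.
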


\begin{proof}
Observe that to prove the proposition, it is enough to show that both
copies $u'$ and $u''$ are contained in~$r$, i.e. $\beg(r)\le \beg(u')$ and $\endd(r)\ge \endd(u'')$. Let $\beg(r)>\beg(u')$. Then both letters $w[\beg(r)-1]$ and
$w[\beg(r)-1+\per(r)]$ are contained in $u'$. Let these letters be
respectively $j$-th and ($j+\per(r)$)-th letters of $u'$. Then we have
$u''[j]=u'[j]\neq u'[j+\per(r)]=u''[j+\per(r)]$, i.e. $u''[j]\neq
u''[j+\per(r)]$, which is a contradiction to the fact that both letters
$u''[j]$ and $u''[j+\per(r)]$ are contained in~$r$. Relation $\endd(r)\ge \endd(u'')$ is proved analogously.
\end{proof}

All maximal PR-repeats
can be easily computed according to
the following lemma.

\begin{lemma}
A maximal gapped periodic repeat $\rep\equiv (u', u'')$ is generated by a maximal repetition~$r$
if and only if $p(\rep)$ is divisible by $\per(r)$ and
$$
\begin{array}{c}
|r|/2<p(\rep)\le |r|-2\,\per(r),\\
u'\equiv w[\beg(r) \dd \endd(r)-p(\rep)],\\
u''\equiv w[\beg(r)+\per(r)  \dd  \endd(r)]. \label{compgenreps}
\end{array}
$$
\end{lemma}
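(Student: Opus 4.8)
The plan is to prove the two implications separately, writing $c=c(\rep)$ and $p=p(\rep)$ throughout, and to lean on two structural facts about runs: that a repetition is contained in its extension and shares its minimal period, and the cyclic-root characterisation of Proposition~\ref{oncycroot}.

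For the forward implication, suppose $r$ generates $\rep$, so that $r$ is the common extension of $u'$ and $u''$. Then $\per(u')=\per(u'')=\per(r)$, both copies are contained in $r$, and $c\ge 2\per(r)$ since each copy is periodic. First I would establish divisibility: the length-$\per(r)$ prefixes of $u'$ and $u''$ are cyclic roots of $r$, equal as words because $u'=u''$; by Proposition~\ref{oncycroot} their start positions are congruent modulo $\per(r)$, whence $p=\beg(u'')-\beg(u')$ is a multiple of $\per(r)$. Next I would pin down the endpoints using maximality. Since $p\equiv 0\pmod{\per(r)}$, the positions $\beg(u')-1$ and $\beg(u'')-1$ are congruent modulo $\per(r)$; were both inside $r$ they would carry equal letters by periodicity of $r$, contradicting left-maximality of $\rep$. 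As $\beg(u'')-1$ is necessarily inside $r$, this forces $\beg(u')-1<\beg(r)$, i.e. $\beg(u')=\beg(r)$; the symmetric argument on the right gives $\endd(u'')=\endd(r)$. Consequently $p=|r|-c$, so $u'\equiv w[\beg(r)\dd\endd(r)-p]$ and $u''$ is the length-$c$ suffix $w[\beg(r)+p\dd\endd(r)]$. Finally the inequalities drop out: $c\ge 2\per(r)$ gives $p=|r|-c\le |r|-2\per(r)$, while a nonempty gap ($p>c$) gives $2p>p+c=|r|$, i.e. $p>|r|/2$.

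For the converse, assume $p$ is a multiple of $\per(r)$ and that the displayed position identities and the inequalities hold. Then each copy is a factor of $r$ of length $c=|r|-p\ge 2\per(r)$ inheriting the period $\per(r)$ of $r$. The key step is to upgrade this to $\per(u')=\per(u'')=\per(r)$: if the minimal period of $u'$ were a proper divisor $q$ of $\per(r)$ (a divisor by Proposition~\ref{divminper}), then $u'$, being long enough to contain two cyclic roots of $r$ shifted by $q$, would contain two equal cyclic roots whose start positions differ by $q\not\equiv 0\pmod{\per(r)}$, contradicting Proposition~\ref{oncycroot}. With $\per(u')=\per(u'')=\per(r)$ and both copies contained in the maximal repetition $r$, the unique maximal repetition of period $\per(r)$ extending each copy is $r$ itself; hence $r$ is the common extension of $u'$ and $u''$ and generates $\rep$, exactly the situation of Proposition~\ref{samext}.

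I expect the main obstacle to be the endpoint identification in the forward direction: converting maximality of $\rep$ into the exact equalities $\beg(u')=\beg(r)$ and $\endd(u'')=\endd(r)$, which only works once divisibility $p\equiv 0\pmod{\per(r)}$ is in hand so that periodicity of $r$ relates the two would-be extensions. The companion subtlety, in the converse, is excluding that a copy has minimal period strictly smaller than $\per(r)$; the cyclic-root count of Proposition~\ref{oncycroot} settles this cleanly, and the bound $c\ge 2\per(r)$ is precisely what is needed to fit two shifted cyclic roots inside a single copy.
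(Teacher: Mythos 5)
Your proof is correct and follows essentially the same route as the paper's: divisibility of $p(\rep)$ by $\per(r)$ via the equal cyclic-root prefixes of the copies and Proposition~\ref{oncycroot}, then the endpoint equalities $\beg(u')=\beg(r)$ and $\endd(u'')=\endd(r)$ from maximality of $\rep$ combined with the periodicity of~$r$. You additionally tighten two points the paper glosses over: you derive the inequalities only after pinning down the endpoints (indeed $p(\rep)>|r|/2$ needs $|r|=p(\rep)+c(\rep)$, not just containment of the copies in~$r$), and you supply an actual argument for the converse---ruling out a copy having minimal period smaller than $\per(r)$ via Propositions~\ref{divminper} and~\ref{oncycroot}---where the paper simply declares it obvious.
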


\begin{proof}
Let $\rep$ be generated by~$r$. Consider prefixes of $u'$ and $u''$ of
length $\per(r)$. These prefixes are equal cyclic roots of~$r$, and by
Proposition~\ref{oncycroot} the difference $\beg(u'')-\beg(u')=p(\rep)$ is divisible by~$\per(r)$. Inequalities
$|r|/2<p(\rep)\le |r|-2\per(r)$ follow immediately from the definition
of a repeat generated by a repetition. To prove the last two conditions of
the lemma, it is sufficient to prove $\beg(u')=\beg(r)$ and $\endd(u'')=\endd(r)$. Let $\beg(u')\neq \beg(r)$, i.e. $\beg(u')>\beg(r)$. Then both letters $w[\beg(u')-1]$ and
$w[\beg(u'')-1]$ are contained in~$r$. Thus, since the difference
$(\beg(u'')-1)-(\beg(u')-1)=p(\rep)$ is divisible by
$\per(r)$, we have  $w[\beg(u')-1]=w[\beg(u'')-1]$ which
contradicts the maximality of~$\rep$. The relation $\endd(u'')=\endd(r)$ is proved analogously. Thus, all the conditions of
the lemma are proved. On the other hand, if $\rep$ satisfies all the
conditions of the lemma then $\rep$ is obviously generated by~$r$.
\end{proof}

\begin{corollary}
A maximal repetition~$r$ generates no more than $\exp(r)/2$ maximal PR-repeats, and
all these repeats can be computed from~$r$ in $O(\exp(r))$ time.
\label{corrgenreps}
\end{corollary}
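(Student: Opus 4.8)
The plan is to read off everything from the preceding Lemma, which characterizes the maximal PR-repeats generated by $r$ purely in terms of their period. By that Lemma, $r$ generates $\rep$ exactly when $p(\rep)$ is a multiple of $\per(r)$ satisfying $|r|/2<p(\rep)\le |r|-2\per(r)$, and in that case the two copies are pinned down by the explicit formulas $u'\equiv w[\beg(r)\dd\endd(r)-p(\rep)]$ and $u''\equiv w[\beg(r)+\per(r)\dd\endd(r)]$. Consequently a PR-repeat generated by $r$ is determined by its period alone: two repeats generated by $r$ with the same $p(\rep)$ have identical copies and hence coincide, so distinct generated repeats correspond to distinct admissible values of $p(\rep)$, and it suffices to count these values.

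First I would write $p(\rep)=k\,\per(r)$ for a positive integer $k$ and rewrite the two defining inequalities as constraints on $k$. Dividing through by $\per(r)$ and using $\exp(r)=|r|/\per(r)$, the lower bound $|r|/2<p(\rep)$ becomes $k>\exp(r)/2$ and the upper bound $p(\rep)\le |r|-2\per(r)$ becomes $k\le\exp(r)-2$. Thus the admissible $k$ are precisely the integers of the half-open interval $(\exp(r)/2,\exp(r)-2]$. The number of such integers is $\lfloor\exp(r)-2\rfloor-\lfloor\exp(r)/2\rfloor$, which is at most $(\exp(r)-2)-(\exp(r)/2-1)=\exp(r)/2-1<\exp(r)/2$, giving the claimed counting bound.

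For the time bound I would simply iterate over the admissible multiples $k\,\per(r)$ in increasing order; there are fewer than $\exp(r)/2=O(\exp(r))$ of them, and once $|r|$ and $\per(r)$ are known each candidate is tested against the interval in $O(1)$ time. For every admissible $k$, the Lemma provides the endpoints of $u'$ and $u''$ as arithmetic expressions in $\beg(r)$, $\endd(r)$, $\per(r)$ and $p(\rep)$, so the corresponding PR-repeat is emitted in constant time. Hence all PR-repeats generated by $r$ are produced in $O(\exp(r))$ time.

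This is essentially a direct corollary, so I do not anticipate a genuine obstacle; the only step requiring care is the integer count, where I must handle the strict versus non-strict endpoints of $(\exp(r)/2,\exp(r)-2]$ correctly so that the bound comes out as $\exp(r)/2$ rather than a slightly larger constant multiple. Double-checking the floor estimates there is the one place I would slow down.
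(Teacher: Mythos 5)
Your proof is correct and follows the same route the paper intends: the corollary is stated as an immediate consequence of the preceding lemma, and your argument---observing that a repeat generated by~$r$ is determined by its period alone, counting the multiples of $\per(r)$ in $(|r|/2,\,|r|-2\per(r)]$, i.e.\ the integers $k$ with $\exp(r)/2<k\le\exp(r)-2$, and emitting each corresponding repeat in $O(1)$ time from the lemma's explicit formulas---is exactly that consequence spelled out. Your floor-function estimate giving at most $\exp(r)/2-1<\exp(r)/2$ admissible values is handled correctly, so the counting and the $O(\exp(r))$ time bound both stand.
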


To find all maximal $\alpha$-gapped PR-repeats in a string $w$,
we first compute all maximal repetitions in~$w$ in $O(n)$ time (see Introduction). Then, for each maximal
repetition~$r$, we output all maximal $\alpha$-gapped repeats generated by~$r$. Using
Corollary~\ref{corrgenreps}, this can be done in $O(\exp(r))$ time. Thus the
total time of processing all maximal repetitions is $O({\cal E}(w))$. Since
${\rm E}(w)=O(n)$ by Theorem~\ref{sumexp}, all maximal
$\alpha$-gapped PR-repeats in~$w$ can be computed in $O(n)$ time.

\subsection{Computing non-PR repeats}

We now turn to the computation of maximal non-PR $\alpha$-gapped
repeats. Recall that non-PR repeats are those which
are either non-periodic, or periodic but not located within a single
run. Our goal is to show that all maximal non-PR $\alpha$-gapped
repeats can be found in $O(\alpha n)$ time.
%
%
Observe that there exists a trivial algorithm for computing all
maximal $\alpha$-gapped repeats in
$O(n^2)$ time that proceeds as follows: for each period~$p\le n$, find
all maximal $\alpha$-gapped repeats with period $p$ in $O(n)$ time by
consecutively comparing symbols $w[i]$ and $w[i+p]$ for $i=1, 2,\ldots , n-p$.

From the results of \cite{Brodal00}, it follows that
all maximal $\alpha$-gapped repeats can be found in time $O(n\log n
+S)$.
This, together with Theorem~\ref{onPk}, implies an $O(\alpha n)$-time
algorithm for the case  $\alpha \ge \log n$.
Therefore, we only have to consider the case $\alpha <\log n$.


\subsubsection*{(i) Preliminaries}
Assume that $\alpha <\log n$. For this case, we proceed with a
modification of the algorithm of~\cite{GabrMan}.
We compute all maximal $\alpha$-gapped non-PR repeats $\rep$
in~$w$ such that $c(\rep)\ge\log n$. To do this, we divide
$w$ into {\it blocks} of $\Delta=(\log n)/4$ consecutive symbols of~$w$.
Without loss of generality, we assume that
$n=2^k\Delta$, i.e. $w$ contains exactly $2^k$ blocks. A word~$x$ of
length~$2^l\Delta$ where $0\le l\le k-1$ is called a {\it basic factor}
of~$w$ if $x= w[i\Delta +1  \dd  (i+2^l)\Delta]$ for some~$i$. Such
an occurrence $w[i\Delta +1  \dd  (i+2^l)\Delta]$ of~$x$ starting at a
block frontier
will be called {\it aligned}. A
basic factor $x$ of length ~$2^l\Delta$, where $1\le l\le k-1$, is called
{\it superbasic} if $x= w[i2^l\Delta +1  \dd  (i+1)2^l\Delta]$ for
some~$i$.
Note that $w$
contains $O(n)$ aligned occurrences of basic factors and $O(\frac{n}{\log
n})$ aligned occurrences of superbasic factors. Let $z\equiv w[q2^l\Delta
+1  \dd  (q+1)2^l\Delta]$ be an aligned occurrence of superbasic factor of
length $2^l$ in~$w$. For $\tau =0,1,\ldots \Delta -1$, an occurrence
$w[q2^l\Delta +1 +\tau  \dd  (q2^l+2^{l-1})\Delta +\tau ]$ of a basic factor
of length $2^{l-1}\Delta$ is called {\it $\tau$-associated} (or simply
{\em associated}) with $z$.
Note that any basic factor occurrence $\tau$-associated
with $z$ is entirely contained in~$z$ and is
uniquely defined by~$z$ and~$\tau$. Thus, $z$ has no more than $\Delta$
associated occurrences of basic factors.

To continue, we need one more definition : for $1\le i, j\le n$, denote by $LCP(i,
j)$ the length of the longest common prefix of $w[i  \dd  n]$ and $w[j  \dd
n]$, and by $LCS(i, j)$ the length of the longest common suffix of $w[1  \dd
i]$ and $w[1  \dd  j]$.

Let $\rep\equiv (u', u'')$ be a maximal gapped repeat in~$w$ such that
$c(\rep )\ge\log n= 4\Delta$. Note that in this case, the left copy $u'$
contains at least one aligned occurrence of superbasic factors. Consider
aligned occurrences of superbasic factors of maximal length contained
in~$u'$. Note that $u'$ can contain either one or two adjacent such
occurrences. Let $z$ be the leftmost of them. Note that in
this case, we have the following restrictions imposed on $u'$:
\begin{equation}
\begin{array}{c}
\beg (z)-|z|<\beg (u')\le\beg (z),\\
\endd (z)\le\endd (u')<\endd (z)+2|z|.
\end{array}
\label{leftcond}
\end{equation}
Thus, $c(\rep )<4|z|$. Consider factor~$z''$ in~$u''$ corresponding
to~$z$ in~$u'$. Note that $z''$ can be non-aligned.
Consider in~$z''$ the leftmost aligned basic factor $y''$ of
of length~$|z''|/2$. Observe that $\beg (z'')\le \beg
(y'')<\beg (z'')+\Delta$ and $y''$ is entirely contained in $z''$.
Let $y'$ be the factor of~$z$ corresponding to factor $y''$ in $z''$.
It is easily seen that $y'$ is an occurrence of a basic factor associated with
$z$, and $\rep$ is uniquely defined by $z$, $y'$ and $y''$. Thus, any
maximal gapped repeat $\rep$ such that $c(\rep )\ge \log n$ is
uniquely defined by a triple $(z, y', y'')$, where $z$ is an aligned
occurrence of some superbasic factor, $y'$ is an occurrence of some basic
factor associated with~$z$, and $y''$ is an aligned occurrence of the same
basic factor. From now on, we will say in such case that $\rep$ {\em is defined} by the
triple $(z, y', y'')$.

Observe that $\rep\equiv (u', u'')$ can be retrieved from $(z, y',
y'')$ using $LCP$ and $LCS$ functions.
\begin{equation}
\begin{array}{c}
\beg (u')=\beg (y')-LCS(\beg (y')-1, \beg (y'')-1),\\
\endd (u')=\endd (y')+LCP(\endd (y')+1, \endd (y'')+1),\\
\beg (u'')=\beg (y'')-LCS(\beg (y')-1, \beg (y'')-1),\\
\endd (u'')=\endd (y'')+LCP(\endd (y')+1, \endd (y'')+1).
\end{array}
\label{compgenrep}
\end{equation}
Assume additionally that $\rep$ is an $\alpha$-gapped repeat for $\alpha
>1$. Then, taking into account inequalities~(\ref{leftcond}) and $c(\rep
)<4|z|$, we have
\begin{eqnarray*}
\endd (y'')&\le &\endd (u'')=\endd (u')+\per(\rep )<\endd (z)+2|z|+\alpha c(\rep )\\
&<&\endd (z)+2|z|+4\alpha |z|<\endd (z) +6\alpha |z|=\endd (z) +12\alpha |y''|.
\end{eqnarray*}
On the other hand, $\beg (y'')\ge \beg (u'')>\endd (u')\ge{\rm
end} (z)$. Thus, for any triple $(z, y', y'')$ defining a maximal
$\alpha$-gapped repeat in~$w$ the occurrence $y''$ is contained in the
segment $w[\endd (z)+1  \dd  \endd (z) +12\alpha |y''|]$ of length
$12\alpha |y''|$ to the right of~$z$. We will denote this segment by ${\cal
I}(z)$. The main idea of the algorithm is to consider all triples $(z, y',
y'')$ which can define maximal $\alpha$-gapped non-PR repeats and
for each such triple, check if it actually defines one, which is then
computed and output.
All the triples $(z, y', y'')$ are considered in a
natural way: for each aligned occurrence~$z$ of a superbasic factor and each
occurrence $y'$ of a basic factor associated with~$z$, we consider all aligned
occurrences $y''$ of the same basic factor in the segment ${\cal I}(z)$.

\subsubsection*{(ii) Naming basic factors on a suffix tree and
  computing their associated occurrences}
We now describe how this computation is implemented.
First we construct a suffix tree for the input string $w$. Suffix tree is a classical
data structure of size $O(n)$ which can be constructed in $O(n)$ for a
word over constant alphabet
see e.g. \cite{Gusfield97}. Using
the suffix tree, we can make in $O(n)$-time preprocessing
which allows to retrieve $LCP(i, j)$ for any $i, j$ in constant time, see
e.g. \cite{Gusfield97}. Similarly, we precompute~$w$ to support
$LCS(i, j)$ for any $i, j$ in constant time. Then we compute all
basic factors of~$w$. This computation is performed by naming all the basic
factors, i.e. assigning to each aligned occurrence of a basic factor a
name of this factor. The most convenient way to name basic factors is
to assign to a basic factor~$y$ of length~$2^l$ a pair $(l, i)$, where $i$ is
the start position of the leftmost aligned occurrence of~$y$ in~$w$. Note
that since we have only $n/\Delta$ distinct start positions~$i$, the size of
the two-dimensional array required for working with these pairs is $O(n)$. To
perform the required computation, we first mark in the suffix tree each node
labeled by a basic factor by the name of this factor (in the case when
this node is implicit we make it explicit). To this end, for each
node~$v$ of the suffix tree we compute the value ${ \minleaf} (v)$ which is
the smallest leaf number divisible by $\Delta$ in the subtree rooted in~$v$
if such a number exists. This can be easily done in $O(n)$ time by a bottom-up
traversal of the tree. Then, each
suffix tree edge $(u, v)$ such that the string depth of~$u$ is less than
$2^l$, the string depth of~$v$ is not less than $2^l$, and  ${ \minleaf} (v)$
is defined is treated in the following way: if the string depth of~$v$ is
$2^l$, node $v$ is marked by name $(l, { \minleaf} (v))$, otherwise a
new node of string depth $2^l$ is created within edge $(u, v)$ and
marked by name $(l, { \minleaf} (v))$. The obtained tree will be
called {\it marked suffix tree}. Since we have $O(n)$ distinct basic factors,
the marked suffix tree contains no more than $O(n)$ additionally inserted nodes. Thus,
this tree has $O(n)$ size and is constructed in $O(n)$ time.

To assign to each aligned occurrence $w[i  \dd  i+2^l-1]$ of a basic factor the name of this factor, we perform
a depth-first top-down traversal of the marked suffix tree. During the
traversal we maintain an auxiliary array ${ basancestor}$: at the first
visit of a node marked by a name $(l, m)$ we set ${ basancestor} [l]$
to $m$,
and at the second visit of this node we reset ${ basancestor} [l]$ to undefined. While
during the traversal we get to a leaf $i$ divisible by $\Delta$,
for each $l=0, 1,\ldots , k-1$ we identify $w[i  \dd  i+2^l-1]$ as an
occurrence of the basic factor named by $(l, { basancestor}[l])$. Note that
this traversal is performed in $O(n)$ time.

Then, we compute all
occurrences of basic factors associated with aligned occurrences of
superbasic factors. This is done again by a depth-first
top-down traversal of the marked suffix tree. During the traversal, we
maintain the same auxiliary array ${ basancestor}$. Assume that during the traversal
we get to a leaf labelled by a position
$q2^p\Delta
+1+\tau$, where $q$ is odd and $0\le\tau <\Delta$. Then for each $l=0, 1,\ldots
, p-1$ such that ${ basancestor} [l]$ is defined, we identify $w[q2^p\Delta
+1+\tau  \dd  (q2^p+2^l)\Delta +\tau]$ as an occurence of the basic factor
named $(l, { basancestor} [l])$, which is $\tau$-associated with the
superbasic
factor occurrence $w[q2^p\Delta +1  \dd  (q2^p+2^{l+1})\Delta ]$. Observe
that this traversal is performed in $O(n)$ time as well.

\subsubsection*{(iii) Computing lists of aligned occurrences of basic factors}
Let $y$ be a $\Delta$-periodic basic factor (cf Introduction). Note that $y$ is also periodic,
and then any occurrence of~$y$ in~$w$ is a repetition. By
Proposition~\ref{divminper}, the period~$\per(y)$ is a divisor of
$p_{\Delta}(y)$. Given the value $p_{\Delta}(y)$, we can compute in constant time
the extension~$r$ of any occurrence~$y'$ of a $\Delta$-periodic basic
factor~$y$ as follows:
\begin{eqnarray*}
\beg (r)&=&\beg (y')-LCS(\beg (y')-1, \beg (y')+p_{\Delta}(y)-1),\\
\endd (r)&=&\endd (y')+LCP(\beg (y')+1, \beg (y')-p_{\Delta}(y)+1).
\end{eqnarray*}
Using Proposition~\ref{minDeltaper}, it is easy to show that any set of all
aligned occurrences of~$y$ having the same extension is a sequence of
occurrences, where the difference between start positions of any two
consecutive occurrences is equal to $p_{\Delta}(y)$, i.e. the start positions
of all these occurrences form a finite arithmetic progression with common
difference $p_{\Delta}(y)$. We will call these sets {\it runs of
occurrences}. The following fact can be easily proved.
\begin{proposition}
Let $y'$, $y''$ be two consecutive aligned occurrences of a basic factor~$y$
in~$w$. Then $|\beg (y')-\beg (y'')|\le |y|/2$ if and only if $y$
is $\Delta$-periodic, $y'$ and $y''$ are contained in the same run of
occurrences, and, moreover, $|\beg (y')-\beg (y'')|=p_{\Delta}(y)$.
\label{occurrun}
\end{proposition}

At the next step of the algorithm, in order to effectively select appropriate
occurrences~$y''$ in the checked triples $(z, y', y'')$, for each basic
factor~$y$ we construct a linked list ${\alignocc} (y)$ of all aligned
occurences of~$y$ in the left-to-right order in~$w$. If $y$ is not
$\Delta$-periodic, each item of ${\alignocc} (y)$ consists of only one aligned
occurrence of~$y$ defined, for example, by its start position (we will call
such items {\it ordinary}). If $y$ is $\Delta$-periodic, each item of
${\alignocc} (y)$ contains a run of aligned occurrences of~$y$. If a run of
aligned occurrences of~$y$ consists of only one occurrence, we will consider
the item of ${\alignocc} (y)$ for this run as ordinary, otherwise, if a run of
aligned occurrences of~$y$ consists of at least two occurrences, the item of
${\alignocc} (y)$ for this run will be defined, for example, by start
positions of leftmost and rightmost occurrences in the run and the value
$p_{\Delta}(y)$ (such item will be called {\it runitem}). The following fact
follows from Proposition~\ref{occurrun}.

\begin{proposition}
Let $y'$, $y''$ be two consecutive aligned occurrences of a basic factor~$y$
in~$w$. Then $|\beg (y')-\beg (y'')|\le |y|/2$ if and only if  $y'$
and $y''$ are contained in the same runitem of ${\alignocc} (y)$ and,
moreover, $|\beg (y')-\beg (y'')|=p_{\Delta}(y)$. \label{occuritem}
\end{proposition}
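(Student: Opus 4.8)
The plan is to obtain Proposition~\ref{occuritem} almost immediately from Proposition~\ref{occurrun}, exploiting the definition of a \emph{runitem} as precisely an item of $\alignocc(y)$ that packages a run of occurrences containing at least two aligned occurrences of $y$. The two statements differ only in that Proposition~\ref{occuritem} speaks of a runitem rather than of a run of occurrences and suppresses the explicit hypothesis that $y$ is $\Delta$-periodic. The guiding observation is that the mere existence of a runitem already encodes $\Delta$-periodicity, so the hypothesis is not lost but rather absorbed into the data-structure vocabulary.

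For the forward implication I would assume $|\beg(y')-\beg(y'')|\le |y|/2$ and invoke Proposition~\ref{occurrun} to conclude that $y$ is $\Delta$-periodic, that $y'$ and $y''$ lie in the same run of occurrences, and that $|\beg(y')-\beg(y'')|=p_{\Delta}(y)$. Since $y'$ and $y''$ are distinct consecutive occurrences (their start positions differ by $p_{\Delta}(y)>0$), this run contains at least two occurrences, so by definition its corresponding item of $\alignocc(y)$ is a runitem rather than an ordinary item. Hence $y'$ and $y''$ lie in the same runitem and $|\beg(y')-\beg(y'')|=p_{\Delta}(y)$, exactly the right-hand side of the claimed equivalence.

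For the converse I would assume $y'$ and $y''$ lie in the same runitem with $|\beg(y')-\beg(y'')|=p_{\Delta}(y)$. Because runitems are created only for $\Delta$-periodic basic factors (all items are ordinary when $y$ is not $\Delta$-periodic, and single-occurrence runs are ordinary as well), the existence of such a runitem forces $y$ to be $\Delta$-periodic. By the definition of $\Delta$-periodicity this gives $|y|\ge 2p_{\Delta}(y)$, i.e. $p_{\Delta}(y)\le |y|/2$. Combining with the hypothesis yields $|\beg(y')-\beg(y'')|=p_{\Delta}(y)\le |y|/2$, the desired inequality.

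I do not expect a genuine technical obstacle here: the proposition is a bookkeeping reformulation of Proposition~\ref{occurrun} in terms of the structure $\alignocc(y)$. The only point that needs a little care is to confirm that the passage between ``same run of occurrences of size at least two'' and ``same runitem'' is faithful in both directions, so that these two phrasings may be used interchangeably; this amounts to checking that the runitems of $\alignocc(y)$ are in bijection with the multi-occurrence runs of $y$, and that single-occurrence runs—stored as ordinary items—are exactly the configurations never produced by two distinct consecutive occurrences.
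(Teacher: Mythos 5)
Your proof is correct and follows essentially the same route as the paper, which simply states that Proposition~\ref{occuritem} ``follows from Proposition~\ref{occurrun}'' without further detail. You have filled in exactly the intended bookkeeping: the correspondence between multi-occurrence runs and runitems of ${\alignocc}(y)$, plus the fact that a runitem's existence forces $\Delta$-periodicity and hence $p_{\Delta}(y)\le |y|/2$.
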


Proposition~\ref{occuritem} implies that if two aligned occurrences $y'$,
$y''$  of a basic factor~$y$ are contained in distinct items of ${\alignocc}
(y)$ then $|\beg (y')-\beg (y'')|>|y|/2$. Therefore, we have the following
consequence from the proposition.

\begin{corollary}
Let $y$ be a basic factor of~$w$. Then for any segment~$v$ in~$w$, the list
${\alignocc} (y)$ contains $O(|v|/|y|)$ items having at least one occurrence
of~$y$ contained in~$v$. \label{itemsnumber}
\end{corollary}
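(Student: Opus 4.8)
The plan is to reduce the statement to a simple packing argument built directly on the separation property of occurrences lying in distinct items of $\alignocc(y)$. That property — namely, that two aligned occurrences $y'$, $y''$ of $y$ contained in distinct items of $\alignocc(y)$ satisfy $|\beg(y')-\beg(y'')|>|y|/2$ — is exactly the consequence of Proposition~\ref{occuritem} stated immediately before the corollary, so I would take it as given and organize the whole proof around it.

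First I would fix the segment $v$ and, for each item of $\alignocc(y)$ that has at least one occurrence of $y$ contained in $v$, choose a single witness occurrence contained in $v$. Listing these witnesses in increasing order of start position as $y_1,\dots,y_m$, where $m$ is the number of items to be bounded, I note that distinct witnesses come from distinct items by construction. Hence the separation property applies to every consecutive pair, giving $\beg(y_{t+1})-\beg(y_t)>|y|/2$ for all $t=1,\dots,m-1$.

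Next I would bound the span of the witnesses' start positions. Since each $y_t$ is contained in $v$, we have $\beg(v)\le\beg(y_t)$ and $\endd(y_t)=\beg(y_t)+|y|-1\le\endd(v)$, so every start position lies in an interval of length at most $|v|-|y|$. Combining this with the per-gap lower bound yields $(m-1)\,|y|/2<\beg(y_m)-\beg(y_1)\le|v|-|y|\le|v|$, whence $m<2|v|/|y|+1=O(|v|/|y|)$, which is the claimed bound.

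This argument is essentially routine, and there is no substantial obstacle to overcome once Proposition~\ref{occuritem} is in hand: the only point requiring care is the witness selection, ensuring that exactly one representative is drawn from each qualifying item so that the separation estimate $>|y|/2$ governs \emph{every} consecutive gap among the $y_t$. With that in place, the counting step closes the proof immediately.
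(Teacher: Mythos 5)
Your proof is correct and takes essentially the same route as the paper: the paper treats the corollary as an immediate consequence of the separation property that two aligned occurrences of $y$ lying in distinct items of ${\alignocc}(y)$ have start positions more than $|y|/2$ apart, and your argument is exactly the packing/counting step that this implication leaves implicit. The witness selection (one occurrence contained in $v$ per qualifying item) and the span bound $(m-1)|y|/2 < |v|$ are precisely what is needed to make that consequence explicit, so there is no gap.
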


To construct the lists ${\alignocc}$, for each $i=1, 2, \ldots, n$ and each
$l=0, 1,\ldots , k-1$, we insert consecutively the occurrence $y'\equiv w[i
\dd  i+2^l-1]$ of some basic factor~$y$ to the appropriate list ${\alignocc}
(y)$ as follows. Consider the last item in the current list ${\alignocc} (y)$.
Let it be an ordinary item consisting of an occurrence $y''$ of~$y$ starting
at position~$j$. Denote $\delta =i-j$. Consider the following two cases
for~$\delta$. Let $\delta >|y|/2$. Then, by Proposition~\ref{occuritem},
$y''$ and $y'$ are contained in distinct items of ${\alignocc} (y)$, and in
this case we insert $y'$ to ${\alignocc} (y)$ as a new ordinary item. Now let
$\delta\le|y|/2$. In this case, by Proposition~\ref{occuritem}, $y''$ and
$y'$ are the first two occurrences of the same run of occurrences of~$y$ and,
moreover, $\delta =p_{\Delta}(y)$. Let $r$ be the extension of the
occurrences of this run. It is easy to see that
$$
\endd (r) = \endd (y') + LCP(\endd (y'') +1, \endd (y') +1),
$$
i.e. $\endd (r)$ can be computed in constant time. From the values ${\rm
beg} (y'')$, $\endd (r)$ and $p_{\Delta}(y)$ we can compute
in constant time the start position of the last occurrence of~$y$ in the
considered run of occurrences and thereby identify completely this run.
Thus, in this case we replace the last item of ${\alignocc} (y)$ by the
identified run of occurrences of~$y$. Now let the last item in ${\alignocc}
(y)$ be a run of occurrences. Then, if $y'$ is not contained in this run, we
insert $y'$ to ${\alignocc} (y)$ as a new ordinary item. Thus, each occurrence
of a basic factor in~$w$ is processed in constant time, and the total time for
construction of lists ${\alignocc}$ is $O(n)$.

Furthermore, in order to optimize the selection of appropriate occurrences $y''$ in
the checked triples $(z, y', y'')$, for each pair $(z, y')$ where $z$ is an
aligned occurrence of a superbasic factor and $y'$ is an occurrence of some
basic factor $y$ associated with~$z$, we compute a pointer ${\firstocc} (z,
y')$ to the first item in ${\alignocc} (y)$ containing at least one occurrence
of~$y$ to the right of~$z$. For these purposes, we use auxiliary lists
${\factends} (i)$ defined for each position~$i$ in~$w$. Lists ${\factends} (i)$
consist of pairs $(z, y')$ and are constructed at the stage computation of
occurrences associated with aligned occurrences of superbasic factors: each
time we find a new occurrence~$y'$ associated with an aligned occurrence~$z$
of a superbasic factor, we insert the pair $(z, y')$ into the list ${\factends} ({\rm
end} (z)+1)$. After construction of lists ${\alignocc}$, we compute
consecutively for each $i=1, 2,\ldots , n$   pointers ${\firstocc} (z, y')$
for all pairs $(z, y')$ from the list ${\factends} (i)$. During the
computation, we save in each list ${\alignocc} (y)$ the last item pointed before
(this item is denoted by ${lastpnt} (y)$). To compute ${\firstocc} (z, y')$,
we go through the list ${\alignocc} (y)$ from ${lastpnt} (y)$ (or from the
beginning of ${\alignocc} (y)$ if ${lastpnt} (y)$ does not exist) until we
find the first item containing at least one occurrence of~$y$ to the right of
the position~$i$. The found item is pointed by ${\firstocc} (z, y')$ and
becomes a new item ${lastpnt} (y)$. Since the total size of lists
${\alignocc}$ and ${\factends}$ is $O(n)$, the total time of computing
${\firstocc} (z, y')$ is also $O(n)$.

\subsubsection*{(iv) Main step: computing large repeats}

At the main stage of the algorithm, in order to process each pair $(z, y')$,
note that all appropriate for $(z, y')$ occurrences~$y''$ contained in ${\cal
I}(z)$ are located in the fragment of ${\alignocc} (y)$ consisting of all
items having at least one occurrence of~$y$ contained in~${\cal I}(z)$. We
will call this fragment {\it checked fragment}. Thus, we consider all items of
the checked fragment by going through this fragment from the first item which
can be found in constant time by the value ${\firstocc} (z, y')$. For each
considered item, we check triples $(z, y', y'')$ for all occurrences $y''$
from this item as follows.

Let the considered item be an ordinary item consisting of only one
occurrence~$y''$. Recall that gapped repeat $(u', u'')$ defined by the
triple $(z, y', y'')$ can be computed in constant time by
formulas~(\ref{compgenrep}). Thus, if $(u', u'')$ is an $\alpha$-gapped
repeat satisfying conditions~(\ref{leftcond}), we output it.

Now let the item considered in the checked fragment be a runitem. This implies
that basic factor~$y$ is $\Delta$-periodic, i.e $y$ is $\Delta$-periodic.
Moreover, from the runitem we can derive the value $p_{\Delta}(y)$. Therefore we can
compute in constant time extensions $r'$ and~$r''$ of occurrences
$y'$ and~$y''$ respectively. Denote by~$\rho$ the run of occurrences
contained in the runitem. Recall that our goal is to compute effectively all
$\alpha$-gapped repeats defined by triples $(z, y', y'')$ such that
$y''\in\rho$. Note that, if  $r'$ and~$r''$ are the same repetition, then by
Proposition~\ref{samext} all such repeats are PR-repeats, therefore we can assume that
$r'$ and~$r''$ are distinct repetitions. Let $(u', u'')$ be an
$\alpha$-gapped repeat defined by a triple $(z, y', y'')$ where $y''\in\rho$.
First, consider the case when $u'$ is not contained in $r'$, i.e. either ${\rm
beg} (u')<\beg (r')$ or $\endd (u')>\endd (r')$.

\begin{proposition}
If $\beg (u')<\beg (r')$, then
$\beg (r')-\beg (u')=\beg (r'')-\beg (u'')$.
\label{begdif}
\end{proposition}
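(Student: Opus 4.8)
The plan is to reduce the claim to an equality about how far each copy extends beyond the common factor $y$, and then to prove that equality by transporting periodic structure from the left copy to the right copy using their equality $u'=u''$. Since $y'$ and $y''$ are corresponding factors of $\rep$, we have $\beg(y')-\beg(u')=\beg(y'')-\beg(u'')$, so the target equation $\beg(r')-\beg(u')=\beg(r'')-\beg(u'')$ is equivalent to $\beg(r')-\beg(y')=\beg(r'')-\beg(y'')$. Write $q=\per(y)$; by Proposition~\ref{divminper} $q$ divides $p_{\Delta}(y)$, and $q=\per(r')=\per(r'')$ since $r'$, $r''$ are the extensions of occurrences of the same factor $y$. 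Put $d=\beg(r')-\beg(u')\ge 1$ and let $P''=\beg(u'')+d$ be the position of $u''$ corresponding to $\beg(r')$ in $u'$. It then suffices to prove $\beg(r'')=P''$.

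First I would transport the left boundary of $r'$ into $u''$. Because $\beg(u')<\beg(r')$, maximality of $r'$ gives $w[\beg(r')-1]\neq w[\beg(r')-1+q]$. Since $y$ is $\Delta$-periodic we have $|y'|\ge 2p_{\Delta}(y)\ge 2q$, and $\beg(r')\le\beg(y')$, so both positions $\beg(r')-1$ and $\beg(r')-1+q$ lie inside $u'$. Using $u'=u''$, the corresponding positions of $u''$, namely $P''-1$ and $P''-1+q$, carry the same pair of (distinct) letters, hence $w[P''-1]\neq w[P''-1+q]$. This rules out any period-$q$ repetition containing both $P''-1$ and $P''-1+q$.

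Next I would transport the periodicity itself. The factor $w[\beg(r')\dd\endd(y')]$ is contained in $r'$ and therefore has period $q$. Since $u'=u''$, the factor of $u''$ occupying the same offsets is exactly $w[P''\dd\endd(y'')]$ (using $\endd(y')-\beg(u')=\endd(y'')-\beg(u'')$), so it also has period $q$ and contains $y''$. Consequently it lies in a period-$q$ repetition extending $y''$, and maximality of $r''$ yields $\beg(r'')\le P''$ together with $\endd(r'')\ge\endd(y'')$. Combining the two steps finishes the proof: if $\beg(r'')\le P''-1$, then $P''-1$ and $P''-1+q$ (the latter at most $\endd(y'')\le\endd(r'')$) both lie in $r''$, forcing $w[P''-1]=w[P''-1+q]$, contradicting the previous paragraph. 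Hence $\beg(r'')=P''$, i.e. $\beg(r'')-\beg(u'')=d=\beg(r')-\beg(u')$.

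The offset bookkeeping is routine; the step that needs care is verifying that every position invoked stays inside the relevant copy and inside $r''$, in particular $\beg(r')-1+q$ and $P''-1+q$. This is exactly where $\Delta$-periodicity enters, through $|y|\ge 2p_{\Delta}(y)\ge 2q$ together with $\beg(r')\le\beg(y')$: these inequalities give $\beg(r')-1+q\le\endd(y')$ and $P''-1+q\le\endd(y'')$, so both the periodicity and its breaking can be read off within the copies. Note that the argument never uses that $r'$ and $r''$ are distinct, consistent with the equality being a purely structural consequence of copy equality and of maximality of the two runs.
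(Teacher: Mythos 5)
Your proof is correct, and at its core it runs on the same mechanism as the paper's: the copy equality $u'=u''$ transports both the period $q=\per(r')=\per(r'')$ and the period-breaking mismatch at a run boundary from one copy to the other, and maximality of the runs then pins down the boundary offsets. The paper's proof is a compressed, symmetric version of this: setting $\gamma'=\beg(r')-\beg(u')$ and $\gamma''=\beg(r'')-\beg(u'')$, it assumes $\gamma'>\gamma''$ and gets an immediate contradiction from the chain $u'[\gamma'+\per(y)]\neq u'[\gamma']=u''[\gamma']=u''[\gamma'+\per(y)]$, and handles $\gamma'<\gamma''$ by the same chain with the roles of $u'$ and $u''$ swapped, using maximality of $r''$ instead of $r'$. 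Your exclusion of $\beg(r'')\le P''-1$ (the mismatch at $\beg(r')-1$ transported to $P''-1$ cannot land strictly inside $r''$) is exactly the paper's first case. Where you genuinely deviate is the other direction: instead of invoking the mismatch at $\beg(r'')-1$ and landing it inside $r'$, you transport the periodic factor $w[\beg(r')\dd\endd(y')]$ into $u''$ and conclude $\beg(r'')\le P''$ from the fact that a period-$q$ factor containing $y''$ must be contained in the extension $r''$ --- the same containment idea the paper itself uses to prove Proposition~\ref{samext}. Both routes are sound; the paper's is shorter and symmetric in the two runs, while yours makes explicit the boundary bookkeeping (that every position invoked lies inside the relevant copy and run, via $q\le p_{\Delta}(y)\le|y|/2$ and $\beg(r')\le\beg(y')$), which the paper leaves implicit.
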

\begin{proof}
Define $\gamma'=\beg (r')-\beg (u')$, $\gamma''=\beg
(r'')-\beg (u'')$. Let $\gamma' >\gamma''$. Then $u'[\gamma'
+\per(y)]\neq u'[\gamma']=u''[\gamma']=u''[\gamma' +\per(y)]$, i.e. we have
a contradiction $u'[\gamma' +\per(y)]\neq u''[\gamma'
+\per(y)]$. Similarly, we obtain
a contradiction $u'[\gamma'' +\per(y)]\neq u''[\gamma''
+\per(y)]$ in the case $\gamma' <\gamma''$.
\end{proof}

The following proposition can be proved analogously.

\begin{proposition}
If $\endd (u')>\endd (r')$, then
$\endd (u')-\endd (r')=\endd (u'')-\endd (r'')$.
\label{enddif}
\end{proposition}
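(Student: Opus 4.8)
The plan is to mirror the proof of Proposition~\ref{begdif}, working at the right ends of the runs rather than the left ones. First I would set $\gamma'=\endd(u')-\endd(r')$ and $\gamma''=\endd(u'')-\endd(r'')$; the hypothesis $\endd(u')>\endd(r')$ gives $\gamma'>0$, and the goal becomes $\gamma'=\gamma''$, which I would obtain by excluding the two strict inequalities symmetrically. Throughout I would invoke three facts already established: the copies are equal words, so $u'[k]=u''[k]$ for every valid index $k$; the two runs share their minimal period, $\per(r')=\per(r'')=\per(y)$, since $r'$ and $r''$ are extensions of occurrences of the same basic factor $y$; and each run is right-maximal, so the character immediately to the right of $\endd(r')$ (resp.\ $\endd(r'')$) breaks the period $\per(y)$.

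Placing the right end of $r'$ inside the copy $u'$ at index $m'=|u'|-\gamma'$ and the right end of $r''$ inside $u''$ at index $m''=|u''|-\gamma''$, and using $|u'|=|u''|=c(\rep)$, right-maximality reads $u'[m'+1]\neq u'[m'+1-\per(y)]$ and $u''[m''+1]\neq u''[m''+1-\per(y)]$. Since $\gamma'>0$ we have $m'<|u'|$, so $m'+1$ is a legitimate index. Suppose $\gamma'>\gamma''$, that is $m'<m''$: then the position $m'+1$ lies at or to the left of the right end of $r''$, hence inside $r''$, so periodicity of $r''$ gives $u''[m'+1]=u''[m'+1-\per(y)]$. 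Chaining with copy equality yields $u'[m'+1]=u''[m'+1]=u''[m'+1-\per(y)]=u'[m'+1-\per(y)]$, contradicting the right-maximality of $r'$. The reverse inequality $\gamma'<\gamma''$ is excluded by the identical computation with the roles of $r'$ and $r''$ interchanged, now using right-maximality of $r''$ at index $m''+1$ together with periodicity of $r'$ there. Hence $\gamma'=\gamma''$, which is exactly $\endd(u')-\endd(r')=\endd(u'')-\endd(r'')$.

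The step I would verify most carefully—treated as routine in the paper's ``analogously''—is the boundary bookkeeping: confirming that $m'+1$ and $m'+1-\per(y)$ genuinely fall inside $r''$ (and the mirror condition in the other case), so that periodicity may legitimately be applied, and that these indices stay within the copies. This is where $\gamma'>0$ is used and where one leans on the fact that $r''$ has length at least $2\per(y)$ (because $y$ is $\Delta$-periodic) and that $y''$, contained in both $r''$ and $u''$, guarantees a wide enough overlap for the period-$\per(y)$ step to remain inside the run. An alternative would be to reverse the word $w$, which swaps $\beg$ with $\endd$ and prefixes with suffixes and thereby turns the claim into Proposition~\ref{begdif}; I prefer the direct mirror above, since reversal also swaps the two copies of the repeat and would require restating the copy correspondence with care.
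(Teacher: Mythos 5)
Your proof is correct and is exactly what the paper intends: the paper gives no separate argument for Proposition~\ref{enddif}, stating only that it ``can be proved analogously'' to Proposition~\ref{begdif}, and your mirrored contradiction argument (right-maximality of one run versus periodicity of the other, transported through equality of the copies, with the two strict inequalities excluded symmetrically) is precisely that analogue. The boundary bookkeeping you single out is sound and is the only content beyond the mirror image of the paper's proof of Proposition~\ref{begdif}.
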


Define
\begin{eqnarray*}
s_\mathit{left}&=&\beg (y')+(\beg (r'')-\beg (r')),\\
s_\mathit{right}&=&\beg (y')+(\endd (r'')-\endd (r')).
\end{eqnarray*}
From Propositions~\ref{begdif} and~\ref{enddif}, we derive the following fact.

\begin{corollary}
If $\beg (u')<\beg (r')$ then $\beg (y'')=s_\mathit{left}$.
If $\endd (u')>\endd (r')$ then $\beg (y'')=s_\mathit{right}$.
\end{corollary}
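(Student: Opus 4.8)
The plan is to obtain both equalities by combining the corresponding displayed proposition (Proposition~\ref{begdif} or Proposition~\ref{enddif}) with the single structural fact that $y'$ and $y''$ are corresponding factors of the repeat $(u', u'')$. Since the copies $u'$ and $u''$ are equal words and $y''$ is the factor of $u''$ corresponding to $y'$ in $u'$, the occurrence of $y'$ inside $u'$ sits at the same offset as the occurrence of $y''$ inside $u''$. Formally, $\beg(y')-\beg(u')=\beg(y'')-\beg(u'')$, and because $|y'|=|y''|$ this is equivalent to $\endd(y')-\endd(u')=\endd(y'')-\endd(u'')$. In particular $\beg(y'')-\beg(y')=\beg(u'')-\beg(u')=\endd(u'')-\endd(u')$, which is the only relation I will need.

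For the first case, suppose $\beg(u')<\beg(r')$. Proposition~\ref{begdif} gives $\beg(r')-\beg(u')=\beg(r'')-\beg(u'')$, which I rewrite as $\beg(u'')-\beg(u')=\beg(r'')-\beg(r')$. Substituting into $\beg(y'')=\beg(y')+(\beg(u'')-\beg(u'))$ yields $\beg(y'')=\beg(y')+(\beg(r'')-\beg(r'))=s_\mathit{left}$, as required.

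For the second case, suppose $\endd(u')>\endd(r')$. Proposition~\ref{enddif} gives $\endd(u')-\endd(r')=\endd(u'')-\endd(r'')$, which I rewrite as $\endd(u'')-\endd(u')=\endd(r'')-\endd(r')$. Here I invoke the end-offset form of the structural fact, namely $\beg(y'')-\beg(y')=\endd(u'')-\endd(u')$, so that $\beg(y'')=\beg(y')+(\endd(r'')-\endd(r'))=s_\mathit{right}$, completing the argument.

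The argument is essentially pure bookkeeping, so the only thing to watch is the passage between start offsets and end offsets: Proposition~\ref{begdif} controls a difference of start positions while Proposition~\ref{enddif} controls a difference of end positions, and the two are reconciled solely because $|y'|=|y''|$ forces $\beg(y'')-\beg(y')$ to equal both $\beg(u'')-\beg(u')$ and $\endd(u'')-\endd(u')$. No periodicity or maximality input beyond the two cited propositions is required.
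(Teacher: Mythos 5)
Your proof is correct and follows exactly the derivation the paper intends: the paper states the corollary as an immediate consequence of Propositions~4 and~5, and the missing bookkeeping is precisely your observation that $y'$ and $y''$ occupy equal offsets in $u'$ and $u''$ (by the formulas defining the repeat from the triple $(z,y',y'')$), so that $\beg(y'')-\beg(y')=\beg(u'')-\beg(u')=\endd(u'')-\endd(u')$. The only nitpick is that the last equality uses $|u'|=|u''|$ rather than $|y'|=|y''|$, but both are trivially true here, so nothing is affected.
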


Thus, for computing $\alpha$-gapped repeats $(u', u'')$ such that $u'$ is not
contained in $r'$, it is enough to consider in $\rho$ only occurrences
$y''_\mathit{left}$ and $y''_\mathit{right}$ with start positions $s_\mathit{left}$ and
$s_\mathit{right}$ respectively, provided that these occurrences exist. We check the occurrences
$y''_\mathit{left}$ and $y''_\mathit{right}$ in the same way as we did
for occurrence $y''$ in the
case of ordinary item. Then, it remains to check all occurrences from $\rho$
except for possible occurrences $y''_\mathit{left}$ and $y''_\mathit{right}$. Denote by
$\rho'=\rho\setminus\{y''_\mathit{left}, y''_\mathit{right}\}$ the set of all such
occurrences. Assume that $|r'|\le |r''|$, i.e. $s_\mathit{left}\le s_\mathit{right}$ (the
case $|r'|>|r''|$ is similar). In order to check
all occurrences from $\rho'$, we consider the following subsets of
$\rho'$ separately: subset $\rho'_1$ of all occurrences $y''$ such that $\beg
(y'')<s_\mathit{left}$, subset $\rho'_2$ of all occurrences $y''$ such that
$s_\mathit{left}<\beg (y'')<s_\mathit{right}$, and subset $\rho'_3$ of all
occurrences $y''$ such that $s_\mathit{right}<\beg (y'')$. Note that start
positions of all occurrences in each of these subsets form a finite
arithmetic progression with common difference $p_{\Delta}(y)$. Thus,
we unambiguously denote all occurrences in each of the subsets
$\rho'_i$,  $i=1, 2, 3$, by $y''_0, y''_1,\ldots , y''_k$ where $y''_0$ is
the leftmost occurrence in the subset $\rho'_i$ and $\beg (y''_j)= {\beg} (y''_0)+jp_{\Delta}(y)$ for $j=1,\ldots , k$. Note that values ${\beg} (y''_0)$ and~$k$ for each subset  $\rho'_i$ can be computed in
constant time.

First, consider an occurrence~$y''_j$ from $\rho'_1$. Let $\rep\equiv (u',
u'')$ be the repeat defined by triple $(z, y', y''_j)$. Note that
\begin{equation}
\per(\rep )=\beg (y''_j)-\beg (y')=q+jp_{\Delta}(y),
\label{eqvforp}
\end{equation}
where $q=\beg (y''_0)-\beg (y')$. Taking into account that $y'$ and
$y''_j$ are contained in maximal repetitions $r'$ and $r''$ respectively,
it is easy to verify that
$$
\begin{array}{c}
LCS(\beg (y')-1, \beg (y''_j)-1)=\beg (y''_j)-\beg (r''),\\
LCP(\endd (y')+1, \endd (y''_j)+1)=\endd (r')-\endd (y').
\end{array}
$$
Therefore, $\beg (u')=\beg (r'')-\per(\rep )=q'-jp_{\Delta}(y)$,
where $q'=\beg (r'')-q$, and $\endd (u')=\endd (r')$. It follows that
$$
c(\rep )=|u'|=\endd (u')-\beg (u')+1=q''+jp_{\Delta}(y),
$$
where $q''=\endd (r')+1-q'$. Recall that for any $\alpha$-gapped repeat
$\rep$, we have $c(\rep)<\per(\rep)\le\alpha
c(\rep)$. Thus, $\rep$ is an $\alpha$-gapped repeat if and only if
\begin{equation}
q''<q\le\alpha q''+(\alpha -1)jp_{\Delta}(y).
\label{alphacond1}
\end{equation}
Moreover, $u'$ has to satisfy conditions~(\ref{leftcond}). Thus, the triple
$(z, y', y''_j)$ defines an $\alpha$-gapped repeat if and only if conditions
(\ref{alphacond1}) and~({\ref{leftcond}) are verified for~$j$. Note that all
these conditions are linear inequalities on~$j$, and then can be resolved
in constant time. Thus, we output all $\alpha$-gapped repeats defined by
triples $(z, y', y'')$ such that $y''\in\rho'_1$ in time $O(1+S)$, where $S$
is the size of the output.

Now consider an occurrence~$y''_j$ from $\rho'_2$. Let $\rep\equiv (u',
u'')$ be the repeat defined by the triple $(z, y', y''_j)$. Note that in this
case, $\per(\rep)$ also satisfies relation~(\ref{eqvforp}). Analogously
to the previous case of set $\rho'_1$, we obtain that $\beg
(u')=\beg (r')$ and $\endd (u')=\endd (r')$, and then
$c(\rep)=|r'|$. Therefore, $\rep$ is an $\alpha$-gapped repeat if and
only if
\begin{equation}
|r'|<q+jp_{\Delta}(y)\le \alpha |r'|.
\label{alphacond2}
\end{equation}
Thus, in this case, we output all $\alpha$-gapped repeats defined by triples
$(z, y', y''_j)$ such that $j$ satisfies conditions (\ref{alphacond2})
and~({\ref{leftcond}). Since all these conditions can be resolved for~$j$ in
constant time, all these repeats can be output in time $O(1+S)$ where $S$
is the size of output.

Finally, consider an occurrence~$y''_j$ from $\rho'_3$. Let $\rep\equiv (u',
u'')$ be the repeat defined by triple $(z, y', y''_j)$. In this case,
$\per(\rep)$ also satisfies relation~(\ref{eqvforp}). Analogously to the
case of set $\rho'_1$, we obtain that $\beg (u')=\beg (r')$
and $\endd (u')=\endd (r')-\per(\rep )= \widehat q'-jp_{\Delta}(y)$,
where $\widehat q'=\endd (r'')-q$, and then
$$
c(\rep)=\endd (u')-\beg (u')+1=\widehat q''-jp_{\Delta}(y),
$$
where $\widehat q''=\widehat q'-\beg (r')+1$. Therefore, $\rep$ is an
$\alpha$-gapped repeat if and only if
\begin{equation}
\widehat q''-jp_{\Delta}(y)<q+jp_{\Delta}(y)\le \alpha (\widehat q''-jp_{\Delta}(y)).
\label{alphacond3}
\end{equation}
Thus, in this case, we output all $\alpha$-gapped repeats defined by triples
$(z, y', y''_j)$ such that $j$ satisfies conditions (\ref{alphacond3})
and~({\ref{leftcond}). Like in the previous cases, this can be done in time
$O(1+S)$, where $S$ is the size of the output.

Putting together all the considered cases, we conclude that all $\alpha$-gapped
repeats defined by triples $(z, y', y'')$ such that $y''\in\rho$ can be
computed in time $O(1+S)$ where $S$ is the size of output. Thus, in $O(1+S)$
time we can process each item of the checked fragment. Therefore, since by
Corollary~\ref{itemsnumber} the checked fragment has $O(\alpha)$ items, the
total time for processing pair $(z, y')$ is $O(\alpha +S)$ where $S$ is
the total number of $\alpha$-gapped repeats defined by triples $(z, y',
y'')$. Since each occurrence~$z$ has no more than $\Delta$ associated
occurrences $y'$, the total number of processed pairs $(z, y')$ is $O(n)$.
Thus the time complexity of the main stage of the algorithm is $O(\alpha n
+S)$, where $S$ is the size of the output. Taking into account that $S=O(\alpha
n)$ by Theorem~\ref{onPk}, we conclude that the time complexity of the main
stage is $O(\alpha n)$. Thus, all maximal $\alpha$-gapped non-PR repeats
$\rep$ in~$w$ such that $c(\rep)\ge\log n$ can be computed in $O(\alpha
n)$ time.

\subsubsection*{(v) Computing small repeats}
To compute all remaining maximal $\alpha$-gapped non-PR repeats
in~$w$, note that the length of any such repeat~$\rep$ is not greater than
$$
(1+\alpha)c(\rep)<(1+\log n)\log n<2\log^2 n.
$$
Thus, setting $\Delta'=\lfloor 2\log^2 n\rfloor$, any such repeat is
contained in at least one of segments ${\cal I}'_i\equiv w[i\Delta'+1  \dd
(i+2)\Delta']$ for $0\le i< n/\Delta'$. Therefore, all the remaining
$\alpha$-gapped repeats can be found by searching separately in
segments ${\cal I}'_i$. The procedure of searching for repeats in ${\cal
I}'_i$ is similar to the algorithm described above. If $\alpha\ge
\log\log n$, searching for repeats in ${\cal I}'_i$ can be done by the algorithm
proposed in~\cite{Brodal00}. The $O(|{\cal I}'_i|\log |{\cal I}'_i| +S)$
time complexity implied by this algorithm, where by Theorem~\ref{onPk}
the output size $S$
is $O(\alpha |{\cal I}'_i|)$, can be bounded here by
$O(\alpha \Delta')$. Thus, the total time complexity of the search in
all segments ${\cal I}'_i$ is $O(\alpha n)$. In the case of $\alpha
<\log\log n$, we search in each segment ${\cal I}'_i$ for all remaining
maximal $\alpha$-gapped non-PR repeats $\rep$ in~$w$ such that
$c(\rep )\ge \log |{\cal I}'_i|$ in time $O(\alpha \Delta')$, in the same
way as we described above for the word~$w$. The total time of the search
in all segments ${\cal I}'_i$ is $O(\alpha n)$. Then, it remains to compute
all  maximal $\alpha$-gapped non-PR repeats~$\rep$ in~$w$ such that
$c(\rep)<\log |{\cal I}'_i|\le 3\log\log n$. Note that the length of any
such repeat is not greater than
$$
(1+\alpha)3\log\log n<(1+\log\log n)3\log\log n\le 6\log^2\log n.
$$
Thus, setting $\Delta''=\lfloor 6\log^2\log n\rfloor$, any such repeat
is contained in at least one of the segments ${\cal I}''_i\equiv w[i\Delta''+1
\dd  (i+2)\Delta'']$ for $0\le i< n/\Delta''$. Note that these segments
are words of length $2\Delta''$ over an alphabet of size~$\sigma$, therefore
the total number of distinct segments ${\cal I}''_i$ is not greater than
$\sigma^{2\Delta''}\le\sigma^{12\log^2\log n}$. In each of the distinct
segments ${\cal I}''_i$, all maximal $\alpha$-gapped repeats can be found
by the trivial algorithm described above in $O({\Delta''}^2)=O(\log^4\log
n)$ time. Thus, maximal $\alpha$-gapped repeats in all distinct
segments ${\cal I}''_i$ can be found in $O(\sigma^{12\log^2\log
n}\log^4\log n)=o(n)$ time. We conclude that all remaining maximal $\alpha$-gapped
repeats in~$w$ can be found in $O(n+S)$ time where $S$ is the total number
of maximal $\alpha$-gapped repeats contained in all segments ${\cal
I}''_i$. According to Theorem~\ref{onPk}, this number can be bounded by
$O(\alpha n)$, and the time for finding all the remaining maximal
$\alpha$-gapped repeats can be bounded by $O(\alpha n)$ as well. This
leads to the final result.

\begin{theorem}
For a fixed~$\alpha>1$, all maximal $\alpha$-gapped repeats in a word of
length~$n$ over a constant alphabet can be found in $O(\alpha n)$ time.
\label{algteorem}
\end{theorem}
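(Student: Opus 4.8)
The plan is to split all maximal $\alpha$-gapped repeats according to the size of their copies, and to dispatch each regime to the cheapest available method. The key quantity controlling everything is the block size $\Delta=(\log n)/4$, chosen so that a copy of length at least $\log n=4\Delta$ is guaranteed to contain a full aligned superbasic factor; this is what makes the triple encoding $(z,y',y'')$ available. I would first dispose of the easy boundary case $\alpha\ge\log n$ by invoking the algorithm of \cite{Brodal00}, which finds all maximal $\alpha$-gapped repeats in $O(n\log n+S)$ time: since $\log n\le\alpha$ here, $n\log n=O(\alpha n)$, and by Theorem~\ref{onPk} the output $S=O(\alpha n)$, so the whole bound collapses to $O(\alpha n)$. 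This reduces the problem to $\alpha<\log n$, which is the regime where the $\Delta$-block machinery pays off.

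Next I would handle the PR-repeats entirely separately via Corollary~\ref{corrgenreps}: compute all runs in $O(n)$ time, and for each run $r$ emit its $O(\exp(r))$ PR-repeats in $O(\exp(r))$ time, for a total of $O(\mathcal{E}(w))=O(n)$ by Theorem~\ref{sumexp}. With PR-repeats out of the way, the core of the argument is the non-PR case with large copies, $c(\rep)\ge\log n$. Here I would carry out the four-stage pipeline built in subsections (ii)--(iv): construct the suffix tree with $LCP/LCS$ in $O(n)$; build the marked suffix tree and name all basic factors and their associated occurrences in $O(n)$ via bottom-up \minleaf\ computation and depth-first traversals; assemble the lists $\alignocc(y)$ (collapsing $\Delta$-periodic occurrences into runitems by Proposition~\ref{occuritem}) and the pointers $\firstocc(z,y')$, all in $O(n)$. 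The crux is the main step: for each of the $O(n)$ pairs $(z,y')$ I scan the checked fragment of $\alignocc(y)$ inside the interval $\mathcal{I}(z)$ of length $12\alpha|y''|$, which by Corollary~\ref{itemsnumber} has only $O(\alpha)$ items, and process each item in $O(1+S)$ time. Ordinary items cost $O(1)$ each; the delicate part is a runitem, where I use Propositions~\ref{begdif}--\ref{enddif} to isolate the two boundary occurrences $y''_\mathit{left},y''_\mathit{right}$ and then reduce the three residual arithmetic-progression subsets $\rho'_1,\rho'_2,\rho'_3$ to the linear-in-$j$ inequalities~(\ref{alphacond1}),~(\ref{alphacond2}),~(\ref{alphacond3}), each solvable in $O(1+S)$. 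Summing gives $O(\alpha n+S)$, and by Theorem~\ref{onPk} again $S=O(\alpha n)$, so this stage is $O(\alpha n)$.

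Finally I would mop up the non-PR repeats with small copies, $c(\rep)<\log n$, by a three-level length-reduction recursion. Such a repeat has length below $2\log^2 n$, so it lies in one of $O(n/\Delta')$ overlapping windows $\mathcal{I}'_i$ of width $\Delta'=\lfloor 2\log^2 n\rfloor$; I run the large-copy machinery (or \cite{Brodal00} when $\alpha\ge\log\log n$) inside each window in $O(\alpha\Delta')$ time, for $O(\alpha n)$ total. The leftover repeats with $c(\rep)<3\log\log n$ are confined to windows $\mathcal{I}''_i$ of width $6\log^2\log n$; since there are at most $\sigma^{12\log^2\log n}=o(n)$ \emph{distinct} such windows over a constant alphabet, the quadratic trivial algorithm on each distinct window costs $o(n)$, and emitting the repeats costs $O(n+S)=O(\alpha n)$. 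Adding up the four contributions—PR, large non-PR, and the two small-copy levels—yields the claimed $O(\alpha n)$ bound.

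The step I expect to be the main obstacle is the runitem case in stage (iv): one must argue that when the extensions $r',r''$ of $y',y''$ differ, every repeat $(u',u'')$ whose left copy is \emph{not} contained in $r'$ forces $\beg(y'')$ to equal exactly $s_\mathit{left}$ or $s_\mathit{right}$ (so only two occurrences need individual checking), while the remaining occurrences split into three progressions on each of which the $\alpha$-gapped condition and the positional constraints~(\ref{leftcond}) become linear in the index $j$. Getting the book-keeping right here—correctly identifying $\beg(u')$ and $\endd(u')$ in each of $\rho'_1,\rho'_2,\rho'_3$, and verifying that all constraints are genuinely affine in $j$—is what guarantees the $O(1+S)$ per-item cost on which the entire $O(\alpha n)$ bound rests.
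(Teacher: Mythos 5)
Your proposal is correct and follows essentially the same route as the paper's own proof: the same decomposition into the boundary case $\alpha\ge\log n$ handled by \cite{Brodal00}, PR-repeats extracted from runs via Corollary~\ref{corrgenreps}, large non-PR repeats computed through the triple encoding $(z,y',y'')$ with the ordinary-item/runitem analysis and the three arithmetic progressions, and the two-level window recursion for small copies, with the output size bounded by $O(\alpha n)$ via Theorem~\ref{onPk} at each stage. No gaps to report.
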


Note that since, as mentioned earlier, a word can contain $\Theta (\alpha
n)$ maximal $\alpha$-gapped repeats, the $O(\alpha
n)$ time bound stated in Theorem~\ref{algteorem} is asymptotically optimal.

\section{Conclusions}

Besides gapped repeats we can also consider gapped palindromes which are
factors of the form $uvu^R$ where $u$ and $v$ are nonempty words and $u^R$ is
the reversal of~$u$ \cite{KK09}. A gapped palindrome $uvu^R$ in a word~$w$ is called {\it
maximal} if $w[\endd (u)+1]\neq w[\beg (u^R)-1]$ and $w[\beg
(u)-1]\neq w[\endd (u^R)+1]$ for $\beg (u)>1$ and $\endd
(u^R)<|w|$. A maximal gapped palindrome $uvu^R$ is $\alpha$-gapped if
$|u|+|v|\le\alpha |u|$ \cite{GabrMan}. It can be shown analogously to the results of this
paper that for $\alpha >1$ the number of maximal $\alpha$-gapped palindromes
in a word of length~$n$ is bounded by $O(\alpha n)$ and for the case of
constant alphabet, all these palindromes can be found in $O(\alpha n)$
time\footnote{Note that in \cite{GabrMan}, the number of maximal
  $\alpha$-gapped palindromes was conjectured to be $O(\alpha^2 n)$.}.

In this paper we consider maximal $\alpha$-gapped repeats with $\alpha >1$.
However this notion can be formally generalized to the case of $\alpha\le 1$.
In particular, maximal $1$-gapped repeats are maximal
repeats whose copies are adjacent or overlapping. It is easy to see that such
repeats form runs whose minimal periods are divisors of the periods of these
repeats. Moreover, each run in a word is formed by at least one maximal
$1$-gapped repeat, therefore the number of runs in a word is not greater than the
number of maximal $1$-gapped repeats. More precisely, each run $r$
is formed by $\lfloor \exp(r)/2\rfloor$ distinct maximal $1$-gapped repeats.
Thus, if a word contains runs with exponent greater than or equal to~4 then
the number of maximal $1$-gapped repeats is strictly greater
than the number of runs. However, using an easy modification of the proof
of ``runs conjecture'' from~\cite{RunsTheor}, it can be also proved the
number of maximal $1$-gapped repeats in a word is strictly less than the
length of the word. Moreover, denoting by ${\cal R}(n)$ (respectively,
${\cal R}_1(n)$) the
maximal possible number of runs (respectively, maximal possible number of maximal
$1$-gapped repeats) in words of length~$n$, we conjecture that ${\cal
R}(n)={\cal R}_1(n)$ since known words with a relatively large number of runs
have no runs with big exponents. We can also consider the case of $\alpha <1$
for repeats with overlapping copies, in particular, the case of maximal
$1/k$-gapped repeats where $k$ is integer greater than~1. It is easy to see
that such repeats form runs with exponents greater than or equal to $k+1$. It
is known from~\cite[Theorem~11]{RunsTheor} that the number of such runs in a
word of length~$n$ is less than $n/k$, and it seems to be possible to modify
the proof of this fact for proving that the number of maximal $1/k$-gapped
repeats in the word is also less than  $n/k=\alpha n$. These observations
together with results of computer experiments for the case of $\alpha >1$
leads to a conjecture that for any $\alpha >0$, the number maximal
$\alpha$-gapped repeats in a word of length~$n$ is actually less than $\alpha
n$. This generalization of the ``runs conjecture'' constitutes an interesting open
problem. Another interesting open question is whether the
obtained $O(n/\delta)$ bound on the number of maximal $\delta$-subrepetitions
is asymptotically tight for the case of constant alphabet.

\paragraph{Acknowledgments.}
This work was partially supported by Russian Foundation for Fundamental
Research (Grant 15-07-03102).

\bibliographystyle{abbrv}
\bibliography{maxrep}

\begin{thebibliography}{10}

\bibitem{BadkobehCrochToop12}
G.~Badkobeh, M.~Crochemore, and C.~Toopsuwan.
\newblock Computing the maximal-exponent repeats of an overlap-free string in
  linear time.
\newblock In L.~Calder{\'{o}}n{-}Benavides, C.~N. Gonz{\'{a}}lez{-}Caro,
  E.~Ch{\'{a}}vez, and N.~Ziviani, editors, {\em String Processing and
  Information Retrieval - 19th International Symposium, {SPIRE} 2012, Cartagena
  de Indias, Colombia, October 21-25, 2012. Proceedings}, volume 7608 of {\em
  Lecture Notes in Computer Science}, pages 61--72. Springer, 2012.

\bibitem{RunsTheor}
H.~Bannai, T.~I, S.~Inenaga, Y.~Nakashima, M.~Takeda, and K.~Tsuruta.
\newblock A new characterization of maximal repetitions by lyndon trees.
\newblock {\em CoRR}, abs/1406.0263, 2014.

\bibitem{BreslauerPhD92}
D.~Breslauer.
\newblock {\em Efficient string algorithmics}.
\newblock PhD thesis, Columbia University, 1992.

\bibitem{Brodal00}
G.~S. Brodal, R.~B. Lyngsø, C.~N.~S. Pedersen, and J.~Stoye.
\newblock Finding maximal pairs with bounded gap.
\newblock {\em J. Discrete Algorithms}, 1(1):77--104, 2000.

\bibitem{Crochemor81}
M.~Crochemore.
\newblock An optimal algorithm for computing the repetitions in a word.
\newblock {\em Inf. Process. Lett.}, 12(5):244--250, 1981.

\bibitem{CrochIlieTinta}
M.~Crochemore, L.~Ilie, and L.~Tinta.
\newblock Towards a solution to the "runs" conjecture.
\newblock In P.~Ferragina and G.~M. Landau, editors, {\em Combinatorial Pattern
  Matching, 19th Annual Symposium, {CPM} 2008, Pisa, Italy, June 18-20, 2008,
  Proceedings}, volume 5029 of {\em Lecture Notes in Computer Science}, pages
  290--302. Springer, 2008.

\bibitem{Crochetal1}
M.~Crochemore, C.~S. Iliopoulos, M.~Kubica, J.~Radoszewski, W.~Rytter, and
  T.~Walen.
\newblock Extracting powers and periods in a string from its runs structure.
\newblock In E.~Ch{\'{a}}vez and S.~Lonardi, editors, {\em String Processing
  and Information Retrieval - 17th International Symposium, {SPIRE} 2010, Los
  Cabos, Mexico, October 11-13, 2010. Proceedings}, volume 6393 of {\em Lecture
  Notes in Computer Science}, pages 258--269. Springer, 2010.

\bibitem{Crochetal11}
M.~Crochemore, M.~Kubica, J.~Radoszewski, W.~Rytter, and T.~Walen.
\newblock On the maximal sum of exponents of runs in a string.
\newblock {\em J. Discrete Algorithms}, 14:29--36, 2012.

\bibitem{CrochRytter95}
M.~Crochemore and W.~Rytter.
\newblock Sqares, cubes, and time-space efficient string searching.
\newblock {\em Algorithmica}, 13(5):405--425, 1995.

\bibitem{DumiMan}
M.~Dumitran and F.~Manea.
\newblock Longest gapped repeats and palindromes.
\newblock In G.~F. Italiano, G.~Pighizzini, and D.~Sannella, editors, {\em
  Mathematical Foundations of Computer Science 2015 - 40th International
  Symposium, {MFCS} 2015, Milan, Italy, August 24-28, 2015, Proceedings, Part
  {I}}, volume 9234 of {\em Lecture Notes in Computer Science}, pages 205--217.
  Springer, 2015.

\bibitem{BeyRunsTheor}
J.~Fischer, S.~Holub, T.~I, and M.~Lewenstein.
\newblock Beyond the runs theorem.
\newblock {\em CoRR}, abs/1502.04644, 2015.

\bibitem{GaliSeiferas83}
Z.~Galil and J.~I. Seiferas.
\newblock Time-space-optimal string matching.
\newblock {\em J. Comput. Syst. Sci.}, 26(3):280--294, 1983.

\bibitem{GabrMan}
P.~Gawrychowski and F.~Manea.
\newblock Longest {\(\alpha\)}-gapped repeat and palindrome.
\newblock In A.~Kosowski and I.~Walukiewicz, editors, {\em Fundamentals of
  Computation Theory - 20th International Symposium, {FCT} 2015, Gda{\'{n}}sk,
  Poland, August 17-19, 2015, Proceedings}, volume 9210 of {\em Lecture Notes
  in Computer Science}, pages 27--40. Springer, 2015.

\bibitem{Gusfield97}
D.~Gusfield.
\newblock {\em Algorithms on Strings, Trees, and Sequences - Computer Science
  and Computational Biology}.
\newblock Cambridge University Press, 1997.

\bibitem{Kolpakov12}
R.~Kolpakov.
\newblock On primary and secondary repetitions in words.
\newblock {\em Theor. Comput. Sci.}, 418:71--81, 2012.

\bibitem{KK00}
R.~Kolpakov and G.~Kucherov.
\newblock On maximal repetitions in words.
\newblock {\em J. Discrete Algorithms}, 1(1):159--186, 2000.

\bibitem{KK09}
R.~Kolpakov and G.~Kucherov.
\newblock Searching for gapped palindromes.
\newblock {\em Theor. Comput. Sci.}, 410(51):5365--5373, 2009.

\bibitem{KKOch}
R.~Kolpakov, G.~Kucherov, and P.~Ochem.
\newblock On maximal repetitions of arbitrary exponent.
\newblock {\em Inf. Process. Lett.}, 110(7):252--256, 2010.

\bibitem{KPPHr13}
R.~Kolpakov, M.~Podolskiy, M.~Posypkin, and N.~Khrapov.
\newblock Searching of gapped repeats and subrepetitions in a word.
\newblock {\em CoRR}, abs/1309.4055, 2013.

\bibitem{KK00a}
R.~M. Kolpakov and G.~Kucherov.
\newblock Finding repeats with fixed gap.
\newblock In {\em {SPIRE}}, pages 162--168, 2000.

\bibitem{KosolobovSTACS15}
D.~Kosolobov.
\newblock {L}empel-{Z}iv factorization may be harder than computing all runs.
\newblock In E.~W. Mayr and N.~Ollinger, editors, {\em 32nd International
  Symposium on Theoretical Aspects of Computer Science, {STACS} 2015, March
  4-7, 2015, Garching, Germany}, volume~30 of {\em LIPIcs}, pages 582--593.
  Schloss Dagstuhl - Leibniz-Zentrum fuer Informatik, 2015.

\bibitem{Lothaire83}
M.~Lothaire.
\newblock {\em Combinatorics on Words}.
\newblock Addison Wesley, 1983.

\bibitem{MainLorentz85}
M.~G. Main and R.~J. Lorentz.
\newblock Linear time recognition of squarefree strings.
\newblock {\em Combinatorial Algorithms on Words}, 1985.

\bibitem{Storer88}
J.~A. Storer.
\newblock {\em Data Compression: Methods and Theory}.
\newblock Computer Science Press, 1988.

\bibitem{Tanimuraetal}
Y.~Tanimura, Y.~Fujishige, T.~I, S.~Inenaga, H.~Bannai, and M.~Takeda.
\newblock A faster algorithm for computing maximal {\(\alpha\)}-gapped repeats
  in a string.
\newblock In C.~S. Iliopoulos, S.~J. Puglisi, and E.~Yilmaz, editors, {\em
  String Processing and Information Retrieval - 22nd International Symposium,
  {SPIRE} 2015, London, UK, September 1-4, 2015, Proceedings}, volume 9309 of
  {\em Lecture Notes in Computer Science}, pages 124--136. Springer, 2015.

\end{thebibliography}

\end{document}